\newcommand{\hide}[1]{}
\newcommand{\xhdr}[1]{\vspace{1.7mm}\noindent{{\bf #1.}}}
\newcommand{\casc}{{\mathbf{t}}}
\newcommand{\alphs}{{\mathbf{A}}}
\newtheorem{theorem}{Theorem}
\newcommand{\netrate}{{\textsc{Net\-Rate}}\xspace}
\newcommand{\infopath}{{\textsc{Info\-Path}}\xspace}
\newcommand{\monet}{{\textsc{mo\-Net}}\xspace}
\newcommand{\kernelcascade}{{\textsc{Kernel\-Cascade}}\xspace}
\newcommand{\expo}{{\textsc{Exp}}\xspace}
\newcommand{\pow}{{\textsc{Pow}}\xspace}
\newcommand{\ray}{{\textsc{Ray}}\xspace}
\newcommand{\ie}{\emph{i.e.}}
\newcommand{\unobs}{{\infty}}
\begin{document}

\icmltitlerunning{Modeling Information Propagation with Survival Theory}

\twocolumn[
\icmltitle{Modeling Information Propagation with Survival Theory}

\icmlauthor{Manuel Gomez-Rodriguez$^{1,2}$}{manuelgr@tuebingen.mpg.de}
\icmlauthor{Jure Leskovec$^2$}{jure@cs.stanford.edu}
\icmlauthor{Bernhard Sch\"{o}lkopf$^1$}{bs@tuebingen.mpg.de}

\icmladdress{$^1$MPI for Intelligent Systems and $^2$Stanford University}

\icmlkeywords{influence maximization, diffusion networks, social networks, temporal dynamics}

\vskip 0.3in
]

\begin{abstract}
Networks provide a `skeleton' for the spread of contagions, like, information, ideas, be\-haviors and diseases. Many times networks over which contagions diffuse are unobserved and need to be inferred.
Here we apply survival theory to develop ge\-ne\-ral additive and multiplicative risk models under which the network inference pro\-blems can be solved effi\-cient\-ly by exploiting their convexity. Our addi\-tive 
risk model generalizes se\-ve\-ral exis\-ting network inference models. We show all these models are particular cases of our more ge\-ne\-ral model. Our multiplicative model allows\- for modeling scenarios in 
which a node can either increase or decrease the risk of activation of another node, in contrast with previous approaches, which consider only positive risk increments. We evaluate the performance of our 
network inference algorithms on large synthetic and real cascade datasets, and show that our models are able to predict the length and duration of cascades in real data.
\end{abstract}

\section{Introduction}
\label{sec:intro}

Network diffusion is one of the fundamental processes ta\-king place in networks~\cite{rogers95diffusion}. For example, information, diseases, rumors, and behaviors spread over underlying social and information networks.
Abs\-trac\-tly, we think of a {\em contagion} that appears at some node of a network and then spreads like an epidemic from node to node over the edges of the network. For example, in information 
propagation, the contagion co\-rres\-ponds to a piece of information~\cite{nowell08letter, leskovec2009kdd}, the nodes corres\-pond to people and infection events are the times when nodes learn about the information. Similarly, we can think about the spread of a new type of be\-havior or an action, like, purchasing and re\-co\-mmen\-ding a new product~\cite{jure06viral} or the propagation of a contagious disease over social network of in\-di\-vi\-duals~\cite{bailey75mathematical}.

Propagation often occurs over networks which are hi\-dden or un\-ob\-served. However, we can observe the trace of the contagion spreading. For example, in information diffusion, we observe when a node learns about the information but not who they heard it from. In epidemiology, a person can become ill but cannot tell who infected her. And, in marketing, it is possible to observe when customers buy products but not who influenced their decisions. Thus, we can observe a set of contagion infection times and the goal is to infer the edges of the underlying network over which the contagion diffused~\cite{manuel10netinf}.

In this paper we propose a general theoretical framework to model information propagation and then infer hidden or unobserved networks using survival theory. We generalize previous work, develop efficient network inference methods, and validate them experimentally. In particular, our methods not only identify the network structure but also infer which links inhibit or encourage the diffusion of the contagion.

\xhdr{Our approach to information propagation} We consider {\em contagions} spreading across a fixed population of nodes. The contagion spreads by nodes forcing other nodes to switch from being uninfected to being infected, but nodes cannot switch in the opposite direction. Therefore, we can represent whether a node is infected at any given time as a nondecreasing (binary) counting process. 
We then model the instantaneous risk of infection, i.e., {\em the hazard rate}~\cite{aalen2008survival} of a node by using the infection times of other previously infected nodes as explanatory va\-ria\-bles or \emph{covariates}. 
By inferring which nodes in\-fluence the ha\-zard rate of a given node, we discover the edges of the underlying network over which propagation takes place. In particular, if the hazard rate of node $i$ depends on the infection time of node $j$, then there is a directed edge $(j, i)$ in the underlying network.

We then develop two models. First, we introduce an additive risk model under which the hazard rate of each node is an additive function of the infection times of other previously infected nodes. We show that se\-ve\-ral previous approaches to network inference~\cite{manuel11icml, lesong2012nips, wang2012feature, manuel13dynamic} are particular cases of our more general additive risk model. How\-ever, all these models im\-pli\-cit\-ly consider previously infected nodes to only {\em increase} the instantaneous risk of infection.
We then relax this assumption and develop a multiplicative risk model under which the ha\-zard rate of each node is multiplicative on the infection times of other previously infected nodes. This allows previously infected nodes to 
either {\em increase} or {\em decrease} the risk of another node getting infected.
For example, trendsetters'{} probability of buying a product may increase when she observes her peers buying a product but may also {\em decrease} when she realizes that \emph{average, mainstream} friends are 
buying the product. Similarly, consider an example of a blog which often mentions pieces of information from a general news media sites, but only whenever they are not related to sports. Therefore, if the general news 
media site publishes a piece of information related to sports, we would like the blog'{}s risk of adopting the information to be smaller than for other type of information. Last, we show how to efficiently fit the parameters of 
both models by using the maximum likelihood principle and by exploiting convexity of the optimization problems.

\xhdr{Related work} In recent years, many network inference algorithms have been developed~\cite{saito2009learning, manuel10netinf, manuel11icml, manuel13dynamic, meyers10netinf, 
snowsill2011refining, netrapalli12, multitree12icml, wang2012feature}. 
These approaches differ in a sense that some infer only the network structure~\cite{manuel10netinf, snowsill2011refining}, while others infer not only the network structure but also the \emph{strength} or the average 
latency of every edge in the network~\cite{saito2009learning, meyers10netinf, manuel11icml, manuel13dynamic, wang2012feature}. Most of the approaches use only temporal information while a few 
methods~\cite{netrapalli12, wang2012feature} consider both temporal information and additional non-temporal features. Our work provides two novel contributions over above approa\-ches. First, our additive risk model is a ge\-ne\-ra\-li\-zation of several models which have been proposed previously in the literature. Second, we develop a multiplicative model which allows nodes to increase or decrease the risk of infection of another node.

\section{Modeling information propagation with survival analysis}
\label{sec:formulation}

\xhdr{Information propagation as a counting process}
We consider multiple independent contagions spreading\- across an unobserved network on $N$ nodes. 
As a single contagion spreads, it creates a {\em cascade}. 
A cascade $\casc^c$ of contagion $c$ is simply a $N$-dimensional vec\-tor $\casc^c:=(t^c_1,\ldots,t^c_N)$ re\-cor\-ding the times when each of $N$ nodes got infected by the contagion $c$: $t^c_i\in [t_{0},t_{0}+T^c]\cup\{\unobs\}$, where $t_{0}$ is the infection time of the first node. 
Generally, contagions do not infect all the nodes of the network, and symbol $\unobs$ is used for nodes that were not infected by the contagion $c$ during the observation window $[t_{0},t_{0}+T^c]$. 
For simplicity, we assume $T^c = T$ for all cascades; the results generalize trivially.
In an information or rumor propagation se\-tting, each cascade $c$ corresponds to a different piece of information or rumor, nodes $i$ are people, and the infection time of a node $t_i^c$ is simply the time when node $i$ first 
learned about the piece of information or rumor.

Now, consider node $i$, cascade $\casc^c = \casc$, and an indicator function $N_{i}(t)$ such that $N_{i}(t) = 1$ if node $i$ is infected by time $t$ in the cascade and $N_{i}(t) = 0$ otherwise. Then, we define the {\em filtration} $\mathcal{F}_{t}$ as the set of nodes that has been infected by time $t$ and their infection times, \ie, $\mathcal{F}_{t} = (\mathbf{t}^{<t})$,
where $\mathbf{t}^{<t} = (t_1, \ldots, t_j, \ldots, t_N | t_j < t)$. By definition, since $N_{i}(t)$ is a nondecreasing counting process, it is a submartingale and satisfies that $E(N_{i}(t) | \mathcal{F}_{t'}) \geq N_{i}(t')$ for any $t > t'$. 
Then we can decompose $N_{i}(t)$ \emph{uniquely} as $N_{i}(t) = \Lambda_{i}(t) + M_{i}(t)$, where $\Lambda_{i}(t)$ is a nondecreasing predictable process, called 
\emph{cumulative intensity process} and $M_{i}(t)$ is a mean zero martingale. This is called the Doob-Meyer decomposition of a submartingale~\cite{aalen2008survival}. 
Consider $\Lambda_{i}(t)$ to be absolutely con\-ti\-nuous, then there exists a \emph{predictable} nonnegative intensity process $\lambda_{i}(t)$ such that:
\begin{equation} \label{eq:doob-meyer-micro}
N_{i}(t) = \int_0^{t} \lambda_{i}(s)\,ds + M_{i}(t).
\end{equation}
Now, we assume that the intensity process $\lambda_{i}(t)$ depends on a vector of explanatory variables or \emph{co\-va\-riates}, $\mathbf{s}(t) = \gamma(\mathbf{t}^{<t} ; t)$, where $\gamma(\cdot)$ is an arbitrary time shaping function that we have to decide upon. In our case the covariate vector accounts for the previously infected nodes up to the time just before $t$, \ie, the filtration $\mathcal{F}_{t}$. 
Then, we can rewrite the intensity process of $N_{i}(t)$ as $\lambda_{i}(t) = Y_{i}(t) \alpha_{i}(t | \mathbf{s}(t))$, where $Y_{i}(t)$ is an indicator such that $Y_{i}(t) = 1$ if node $i$ is susceptible to 
be infected just before time $t$ and $0$ otherwise, and $\alpha_{i}(t | \mathbf{s}(t))$ is called the intensity or hazard rate of node $i$ and it is defined conditional on the values of the co\-va\-riates. Note that 
the hazard rate must be nonnegative at any time $t_i$ since otherwise $N_{i}(t)$ would decrease, violating the assumptions of our framework. In other words, a node remains susceptible as long as it did not get infected.

Our goal now is to infer the hazard function $\alpha_{i}(t | \mathbf{s})$ for each node $i$ from a set of recorded cascades $\mathbf{C}= \{\mathbf{t}^1, \ldots, \mathbf{t}^{|C|}\}$. 
This will allow us to discover the edges of the underlying network and also predict future infections. 
In particular, if there is an edge $(j, i)$ in the underlying network, the hazard rate of node $i$ will depend on the infection time of node $j$. Therefore, the ha\-zard $\alpha_{i}(t | \mathbf{s})$ tells us about the incoming edges to node $i$.
Also, we will be able to predict future infections by computing the cumulative probability $F_{i}(t | \mathbf{s}(t_i))$ of infection of a susceptible node $i$ at any given time $t$ using the hazard 
function~\cite{aalen2008survival}:
\begin{equation} \label{eq:cdf}
F_{i}(t | \mathbf{s}(t)) = 1 - e^{-\int_0^{t} \alpha_{i}(t' | \mathbf{s}(t'))\,dt'}.
\end{equation} 
%
In the remainder of the paper, we propose an additive and a multiplicative model of hazard functions $\alpha_{i}(t | \mathbf{s})$ and validate them experimentally in synthetic and real data. There are 
several reasons to do so. First, to provide a general framework which is flexible enough to fit cascading processes over networks in different domains. Second, to allow for both positive and negative influence 
of a node in its neighbors'{} hazard rate, wi\-thout violating the nonnegativity of hazard rates over time. Third, as it has been argued the necessity of both additive and multiplicative models in traditional survival 
analysis literature~\cite{aalen2008survival}, our framework also supports networks in which some nodes have additive hazard functions while others have multiplicative hazard functions.

\section{Additive risk model of information propagation}
\label{sec:additive}
%
First we consider the hazard function $\alpha_{i}(t | \mathbf{s}(t))$ of node $i$ to be additive on the infection times of other pre\-vious\-ly infected nodes. We then show that this model is equivalent to the continuous 
time independent cascade model~\cite{manuel11icml}.

Consider the hazard rate of node $i$ to be: 
\begin{equation} \label{eq:hazard-additive}
\alpha_i(t | \mathbf{s}(t)) = \boldsymbol{\alpha}_i^T  \mathbf{s}(t)=\boldsymbol{\alpha}_i^T \gamma(\mathbf{t}^{<t} ; t),
\end{equation}
where $\boldsymbol{\alpha}_i = (\alpha_{1i}, \ldots, \alpha_{Ni}) \geq \mathbf{0}$ is a nonnegative parameter vector and $\gamma(\cdot) \geq \mathbf{0}$ is an arbitrary positive time shaping function on the previously 
infected nodes up to time $t$. We force the parameter vector and time shaping function to be nonnegative to avoid ill-defined negative hazard functions at any time $t$.
We then assume\- that each covariate depends only on one previously infected node and therefore each parameter $\alpha_{ji}$ only models the effect of a single node $j$ on node $i$. Then, 
$\boldsymbol{\alpha}_i \in \mathbb{R}^{N}_{+}$, where $\alpha_{ii} = 0$.
For simplicity, we apply\- the same time shaping function to each of the parents'{} infection times, where we define parents of node $i$ to be a set of nodes $j$ that point to $i$. Mathematically this means that $\gamma(\mathbf{t}^{<t} ; t) = (\gamma(t_1 ; t), \ldots, \gamma(t_{N} ; t))$. 

Our goal now is to infer the optimal parameters $\boldsymbol{\alpha}_i$ for every node $i$ that maximize the likelihood of a set of observed cascades $C$. By inferring the parameter vector $\boldsymbol{\alpha}_i$, we also discover the underlying network over which propagation occurs: if $\alpha_{ji} \neq 0$, there is an edge $(j, i)$, and if $\alpha_{ji} = 0$, then there is no edge. 

We proceed as follows. First we compute the cumulative likelihood $F_i(t | \mathbf{s}(t))$ of infection of node $i$ from the hazard rate using Eq.~\ref{eq:cdf}:
\begin{equation} \label{eq:cdf-additive}
F_i(t | \mathbf{s}(t); \boldsymbol{\alpha}_i) = 1 - \prod_{j : t_j < t} e^{- \alpha_{ji} \int_{t_j}^{t} \gamma(t_j ; t')\,dt' }.
\end{equation}
Then, the likelihood of infection $f_i(t | \mathbf{s}(t))$ is:
\begin{equation} \label{eq:pdf-additive}
f_i(t | \mathbf{s}(t); \boldsymbol{\alpha}_i) = \sum_{j : t_j < t} \alpha_{ji} \gamma(t_j ; t) \prod_{k : t_k < t} e^{-\alpha_{ki} \int_{t_k}^{t} \gamma(t_k ; t')\,dt'}.
\end{equation}
Now, consider cascade $\casc:=(t_1, \ldots, t_N)$. We first compute the likelihood of the observed infection times $\casc^{\leq T} = (t_1, \ldots, t_N | t_i\leq T)$ during the observation window $T$. Each infection is con\-di\-tio\-na\-lly independent on infections which occur later in time given previous infections. Then, the likelihood factorizes over nodes as:
\begin{multline} \label{eq:likelihood-cascade-infections-additive}
f(\casc^{\leq T}; \alphs) = \prod_{i : t_i < T} \sum_{j : t_j < t_i} \alpha_{ji} \gamma(t_j ; t_i) \times \\
\prod_{k : t_k < t_i} e^{- \alpha_{ki} \int_{t_k}^{t_i} \gamma(t_k ; t)\,dt}.
\end{multline}
where $\alphs := \left[ \boldsymbol{\alpha_i} \right] \in \mathbb{R}^{N\times N}_{+}$. However, Eq.~\ref{eq:likelihood-cascade-infections-additive} only considers infected nodes. The fact that some nodes are \emph{not} infected during the observation 
window is also informative. We thus add survival terms 
for any noninfected node $n$ ($t_n > T$, or equivalently $t_n=\infty$) and apply logarithms. Therefore, the log-likelihood of cascade $\mathbf{t}$ is:
\begin{equation} \label{eq:log-likelihood-cascade-additive}
\begin{split}
\log f(\mathbf{t} ; \alphs) &= \sum_{i : t_i < T} \log \left(\sum_{j : t_j < t_i} \alpha_{ji} \gamma(t_j ; t_i)\right) \\
&- \sum_{i : t_i < T} \sum_{k : t_k < t_i} \alpha_{ki} \int_{t_k}^{t_i} \gamma(t_k ; t)\,dt\\
&- \sum_{n : t_n > T} \sum_{m : t_m < T} \alpha_{m,n} \int_{t_m}^{T} \gamma(t_m ; t)\,dt,
\end{split}
\end{equation}
where the first two terms represent the infected nodes, and the third term represents the noninfected nodes at the end of the observation window $T$. 
As each cascade propagates independently of others, the log-likelihood of a set of cascades $C$ is the sum of the log-likelihoods of the individual cascades. 
Now, we apply the maximum likelihood principle on the log-likelihood of the set of cascades to find the optimal parameters $\boldsymbol{\alpha}_i$ of every node $i$:
\begin{equation}
	\label{eq:opt-problem-additive}
	\begin{array}{ll}
		\mbox{minimize$_{\alphs}$} & - \sum_{c \in C} \log f(\casc^c;\alphs) \\
		\mbox{subject to} & \alpha_{ji} \geq 0,\, i,j=1,\ldots,N, i \neq j,
	\end{array}
\end{equation}
where $\alphs := \left[ \boldsymbol{\alpha_i} \right] \in \mathbb{R}^{N\times N}_{+}$. The solution to Eq.~\ref{eq:opt-problem-additive} is unique and computable:
\begin{theorem}
	The network inference problem for the additive risk model defined in Eq.~\ref{eq:opt-problem-additive} is convex in $\alphs$.
\end{theorem}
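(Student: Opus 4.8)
The plan is to certify convexity of the constrained program in Eq.~\ref{eq:opt-problem-additive} by checking its two standard ingredients: that the feasible region is a convex set, and that the objective $-\sum_{c\in C}\log f(\casc^c;\alphs)$ is a convex function of $\alphs$. The feasible region is handled immediately, since it is the intersection of the half-spaces $\{\alpha_{ji}\geq 0\}$ with the affine subspace $\{\alpha_{ii}=0\}$, hence a convex polyhedron. All the real work therefore concerns the objective, and because a nonnegative sum of convex functions is convex, it suffices to prove that the single-cascade negative log-likelihood $-\log f(\casc;\alphs)$ of Eq.~\ref{eq:log-likelihood-cascade-additive} is convex in $\alphs$; the result for the full objective then follows by summing over $c\in C$.

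First I would split $-\log f(\casc;\alphs)$ into the three summands appearing in Eq.~\ref{eq:log-likelihood-cascade-additive}. The second and third summands are, after flipping sign, $\sum_{i:t_i<T}\sum_{k:t_k<t_i}\alpha_{ki}\int_{t_k}^{t_i}\gamma(t_k;t)\,dt$ and the analogous survival term for the uninfected nodes. In both, the integrals $\int\gamma\,dt$ are fixed numbers once the cascade is observed and carry no dependence on $\alphs$, so these two expressions are \emph{linear} in the entries $\alpha_{ji}$ and are therefore convex (indeed affine). Only the first summand, whose contribution to the negative log-likelihood is $-\sum_{i:t_i<T}\log\bigl(\sum_{j:t_j<t_i}\alpha_{ji}\gamma(t_j;t_i)\bigr)$, needs a genuine argument.

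The key observation is that for each infected node $i$ the inner sum $g_i(\alphs):=\sum_{j:t_j<t_i}\alpha_{ji}\gamma(t_j;t_i)$ is an affine function of $\alphs$, since the weights $\gamma(t_j;t_i)$ are fixed nonnegative constants. Because $u\mapsto-\log u$ is convex on $(0,\infty)$, and the composition of a convex function with an affine map is convex, each term $-\log g_i(\alphs)$ is convex, and hence so is their sum over the infected nodes. Adding the affine second and third summands preserves convexity, so $-\log f(\casc;\alphs)$ is convex; summing over cascades and restricting to the convex feasible set gives a convex program, as claimed.

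The step I expect to require the most care is the first summand: one must verify that the nonnegativity conditions $\boldsymbol{\alpha}_i\geq\mathbf{0}$ and $\gamma(\cdot)\geq\mathbf{0}$ from Eq.~\ref{eq:hazard-additive} really do keep the argument $g_i(\alphs)$ of the logarithm nonnegative, so that the convex-with-affine composition rule applies on the domain where $-\log$ is defined and the function is never evaluated outside $(0,\infty)$ in the feasible region. This is precisely where the modeling constraints that the paper imposes on $\boldsymbol{\alpha}_i$ and $\gamma$ earn their place in the argument.
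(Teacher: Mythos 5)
Your proposal is correct and follows essentially the same route as the paper's own (very terse) proof: the paper invokes ``linearity, composition rules for convexity, and concavity of the logarithm,'' which is exactly your decomposition into affine integral terms plus $-\log$ composed with the affine maps $g_i(\alphs)$. Your additional remarks on the convexity of the feasible set and on keeping $g_i(\alphs)$ in the domain of the logarithm are sound elaborations of details the paper leaves implicit.
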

\begin{proof} 
Convexity follows from linearity, composition rules for convexity, and concavity of the lo\-ga\-rithm.
\end{proof}
%
%
There are several common features of the solutions to the network inference problem under the additive risk model. The first term in the log-likelihood of each cascade, defined by Eq.~\ref{eq:log-likelihood-cascade-additive}, 
ensures that for each infected node $i$, there is at least one previously infected parent since otherwise the log-likelihood would be negatively unbounded, \ie, $\log 0 = 1$. Moreover, there 
exists a natural diminishing property on the number of pa\-rents of a node -- since the logarithm grows slowly, it weakly rewards infected nodes for having many parents.
The second and third term in the log-likelihood of each cascade, defined by Eq.~\ref{eq:log-likelihood-cascade-additive}, consist of positively weighted L1-norm on the vector $\alphs$. L1-norms are well-known heuristics to encourage sparse solutions~\cite{boyd2004convex}. That means, optimal networks under the additive risk model are sparse.
\begin{table}[t]
    \begin{center}
    \begin{small}
    \begin{tabular*}{0.48\textwidth}{@{\extracolsep{\fill}} l l}
        \toprule
	\textbf{Network Inference Method} &  $\bf \gamma(t_j ; t_i)$ \\
	\midrule
	\netrate, \infopath (\expo) & $I(t_j < t_i)$  \\
	\netrate, \infopath (\pow) & $\max(0, 1/(t_i - t_j))$ \\
	\netrate, \infopath (\ray) & $\max(0, t_i - t_j)$ \\
	\kernelcascade & $\{k(\tau_l,  t_i - t_j)\}_{1}^{m}$ \\
	\monet & $I(t_j < t_i) \gamma e^{-d(\mathbf{f}_j, \mathbf{f}_i)}$ \\
	\bottomrule
    \end{tabular*}
    \end{small}
    \end{center}
    \vspace{-4mm}
    \caption{Mapping from several network inference methods to our general additive risk model.}
    \label{tab:mapping}
\end{table}

\xhdr{Generalizing present network inference methods} 
Network in\-fe\-rence methods~\cite{manuel11icml, manuel13dynamic, lesong2012nips, wang2012feature} model information propagation using continuous time generative probabilistic models of di\-ffu\-sion. In 
such models, one typically starts describing the pairwise interactions between pairs of nodes. One defines a pairwise infection likelihood $f_i(t | t_j ; \theta_{ji})$ of node $j$ infecting node $i$. Then, one 
continues computing the likelihood of an infection of a node by assuming a node gets infected once any of the previously infected nodes succeeds\- at infecting her, as in the independent cascade 
model~\cite{kempe03maximizing}. As a final step, the likelihood of a cascade is computed from the likelihoods of individual infections. The network inference problem can then be solved by finding the network that maximizes the likelihood of observed infections. Importantly, the following result holds:
\begin{theorem} \label{th:mapping}
The continuous time independent cascade model~\cite{manuel11icml} is an a\-dditive hazard model on the pairwise hazards between a node and her parents.
\end{theorem}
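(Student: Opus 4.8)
The plan is to give an explicit term-by-term identification between the cascade likelihood of the continuous time independent cascade model and the additive-hazard likelihood of Eq.~\ref{eq:log-likelihood-cascade-additive}, and to trace this identification back to a single survival-analytic fact: the hazard rates of independent competing sources add precisely when their survival functions multiply. First I would recall the construction of~\cite{manuel11icml}. For each ordered pair $(j,i)$ one posits a pairwise transmission density $f(t_i \mid t_j; \alpha_{ji})$, from which one derives the pairwise survival $S(t_i \mid t_j) = 1 - \int_{t_j}^{t_i} f(\tau \mid t_j)\,d\tau$ and the pairwise hazard $h(t_i \mid t_j) = f(t_i \mid t_j)/S(t_i \mid t_j)$. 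The defining independence assumption is that node $i$ escapes infection up to time $t$ only if it independently survives the attempt of every already-infected parent, so the survival function of $i$ factorizes as $S_i(t) = \prod_{j : t_j < t} S(t \mid t_j)$.

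Taking logarithms and differentiating is the crux. Writing $\Lambda_{ji}(t) = -\log S(t \mid t_j)$ for the pairwise cumulative hazard, the factorization gives $-\log S_i(t) = \sum_{j : t_j < t} \Lambda_{ji}(t)$, and differentiating turns this sum into the overall hazard rate $\alpha_i(t) = -\tfrac{d}{dt}\log S_i(t) = \sum_{j : t_j < t} h(t \mid t_j)$. This is exactly the additive form of Eq.~\ref{eq:hazard-additive} once one sets $\alpha_{ji}\gamma(t_j; t) = h(t \mid t_j)$, so that $\int_{t_j}^{t}\alpha_{ji}\gamma(t_j; t')\,dt' = \Lambda_{ji}(t)$. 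Substituting back through the identity $f_i(t) = \alpha_i(t)\,S_i(t)$ reproduces Eq.~\ref{eq:pdf-additive}, and the survival contributions of the noninfected nodes reproduce the third term of Eq.~\ref{eq:log-likelihood-cascade-additive}, so the two cascade likelihoods coincide termwise.

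To complete the correspondence I would check that the parametric instances used in practice are recovered as particular time shaping functions, as collected in Table~\ref{tab:mapping}: for example the exponential density $f(t_i\mid t_j)=\alpha_{ji}e^{-\alpha_{ji}(t_i-t_j)}$ yields the constant pairwise hazard $h(t\mid t_j)=\alpha_{ji}$ and hence $\gamma(t_j; t)=I(t_j<t)$, while the \ray and \pow choices give the remaining rows. Each such verification is a one-line hazard computation from the corresponding density.

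The main obstacle is not computational but definitional: one must state precisely what the continuous time independent cascade model is --- the pairwise density, the product-survival assumption, and the resulting cascade likelihood of~\cite{manuel11icml} --- because the entire identification rests on matching that likelihood to Eq.~\ref{eq:log-likelihood-cascade-additive}. Once the product form of the survival is isolated, the additivity of the hazard is forced, since the logarithm converts the product over parents into a sum and differentiation preserves additivity; all that remains is bookkeeping between the pairwise hazards $h(t\mid t_j)$ and the factored parameters $\alpha_{ji}\gamma(t_j;t)$.
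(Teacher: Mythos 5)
Your proposal is correct and follows essentially the same route as the paper: both arguments start from the product-form survival $S_i(t)=\prod_{j:t_j<t}S(t\mid t_j)$ of the independent cascade model and read off the hazard as a sum of pairwise hazards (the paper via the quotient $f/S$, you via $-\tfrac{d}{dt}\log S_i$, which is the same identity). The extra bookkeeping you include --- matching the factored parameters $\alpha_{ji}\gamma(t_j;t)$ to the pairwise hazards and checking the Table~\ref{tab:mapping} instances --- goes beyond the paper's two-line proof but only elaborates the same idea.
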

\vspace{-3mm}
\begin{proof}
In the continuous time independent cascade model, for a given node $i$, the likelihood of infection 
$f_i(t | \mathbf{t}^{<t}; \Theta)$ and the probability of survival $S_i(t | \mathbf{t}^{<t} ; \Theta)$ given the previously infected nodes $\mathbf{t}^{<t}$ are:
\begin{align*}
f(t | \mathbf{t}^{<t} ; \Theta) &= \prod_{k : t_k < t} S(t | t_k ; \theta_{ki}) \sum_{j : t_j < t} \alpha_i(t | t_j ; \theta_{ji}), \\
S(t | \mathbf{t}^{<t} ; \Theta) &= \prod_{k : t_k < t} S(t | t_k ; \theta_{ki}).
\end{align*}
Then, the hazard of node $i$ is
\begin{equation}
\alpha_{i}(t | \mathbf{t}^{<t} ; \Theta) = \sum_{j : t_j < t} \alpha_i(t | t_j ; \theta_{ji}),
\end{equation}
which is trivially additive on the pairwise hazards between a node and her parents.
\end{proof}
Therefore, our model is a generalization of the con\-ti\-nuous time independent cascade model. Several other models used by state of the art network
inference methods map easily to our general additive risk model (see Table~\ref{tab:mapping}). For example, pairwise transmission likelihoods used in \netrate~\cite{manuel11icml} and 
\infopath~\cite{manuel13dynamic} result in simple pairwise hazard rates that map into our model by setting the time shaping functions $\gamma(\cdot)$. 
The \emph{kernelized} hazard functions used in \kernelcascade~\cite{lesong2012nips} map into our model by considering $m$ covariates per parent, 
where $k(\tau_l, \cdot)$ is a kernel function and $\tau_l$ is the $l^{th}$ point in a $m$-point uniform grid over of $[0, T]$. This allows to model multimodal
hazard functions.
Finally, the featured-enhanced diffusion model used in \monet~\cite{wang2012feature} maps into our model by considering a time shaping function with 
both temporal and non-temporal covariates, where $d(\mathbf{f}_j, \mathbf{f}_i)$ denote the distance between two non-temporal feature vectors and $\gamma$ 
is the normalization constant.

\section{Multiplicative risk model of information propagation}
\label{sec:multiplicative}
Existing approaches to network inference only consider edges in the network to increase the hazard rate of a node. We next provide an extension where we can model
situations in which a parent can either increase or decrease the hazard rate of the target node. We achieve this by examining a case where the hazard function $\alpha_{i}(t | \mathbf{s}(t))$ of 
node $i$ is multiplicative on the covariates, i.e., infection times of other nodes of the network. We consider the hazard rate of node $i$ to be: 
\begin{equation} \label{eq:hazard-multiplicative}
\alpha_i(t | \mathbf{s}(t)) = \alpha_{0i}(t) \prod_{j : t_j < t} \beta_{ji},
\end{equation}
where $\alpha_{0i}(t) \geq 0$ is a fixed or time varying baseline function, which is independent of the previously infected nodes, and $\beta_{ji} \geq \epsilon > 0$ are the parameters of the model, which represent the \emph{positive} or \emph{negative} influence of node $j$ on node $i$. If $\beta_{ji} > 1$, then when node $j$  gets infected, the instantaneous risk of infection of node $i$ increases. Similarly, if $\beta_{ji} < 1$, then it 
decreases, and, if $\beta_{ji} = 1$, node $j$ does not have any effect on the risk of node $i$, i.e., there is no edge in the network. 
The baseline function $\alpha_{0i}(t)$ have a complex shape and is chosen based on expert knowledge. For simplicity, we consider simple functions such as 
$\alpha_{0i}(t)=e^{\alpha_{0i}}$, $\alpha_{0i}(t)=e^{\alpha_{0i}} t$, or $\alpha_{0i}(t)=e^{\alpha_{0i}} / t$, where we set $\alpha_{0i}$ to some value equal for all nodes $i$. We note that we also tried to include $\alpha_{0i}$ as a variable
in the network inference problem, but this did not lead to improved performance.

Our goal now is to infer the optimal parameters $\beta_{ji}$ that maximize the likelihood of a set of observed cascades $C$. Importantly, by inferring the parameters $\beta_{ji}$, we also discover the underlying network over which propagation occurs. If $\beta_{ji} \neq 1$, then there is an edge from node $j$ to node $i$, and if $\beta_{ji} = 1$, there is not edge. 

To this aim, we need to compute the likelihood of a cascade starting from the hazard rate of each node.
We first compute the cumulative likelihood $F_i(t | \mathbf{s}(t))$ of infection of a node $i$ using Eq.~\ref{eq:cdf}:
\begin{multline} \label{eq:cdf-multiplicative}
F_i(t | \mathbf{s}(t);  \boldsymbol{\beta_{i}}) = \\
1 - \exp \left(- \sum_{j : t_j \leq t, j > 0} \prod_{k : t_k < t_j, k > 0} \beta_{ki} \int_{t_{j-1}}^{t_j} \alpha_{0i}(t) \,dt\right),
\end{multline}
where $\boldsymbol{\beta}_i = (\beta_{1i}, \ldots, \beta_{Ni})$. Then, the likelihood of infection $f_i(t | \mathbf{s}(t))$ is:
\begin{equation} \label{eq:pdf-multiplicative}
f_i(t | \mathbf{s}(t); \boldsymbol{\beta_{i}}) = \alpha_{0i}(t) \left(\prod_{k : t_k < t} \beta_{ki}\right) (1-F_i(t | \mathbf{s}(t))),
\end{equation}
where the indices indicate temporal order, $t_0 = 0 < t_1 < \ldots < t_{i-1} < t$. The key observation to compute the likelihood of infection from the cumulative likelihood is to realize that there is only one integral in the cumulative likelihood that contains $t$, and so we only need to take the derivative with respect to variable $t$. 

Now, consider cascade $\casc:=(t_1, \ldots, t_N)$. We first compute the likelihood of the observed infections $\casc^{\leq T}=(t_1,\ldots,t_N|t_i\leq T)$. Each infection is con\-di\-tio\-na\-lly independent on infections which occur later in time given previous infections. Then, the likelihood fac\-to\-rizes over the nodes as:
%
%
%
\begin{multline} \label{eq:likelihood-cascade-infections-multiplicative}
f(\casc^{\leq T}; B) = \prod_{i : t_i < T} \alpha_{0i}(t_i) \prod_{k : t_k < t_i} \beta_{ki} \times \\
\exp \left( \sum_{j : t_j \leq t_i} \prod_{k : t_k < t_j, k > 0} \beta_{ki} \int_{t_{j-1}}^{t_j} \alpha_{0i}(t) \,dt \right),
\end{multline}
where $B := \{\beta_{ji}\,|\, i,j=1,\ldots,n, i \neq j\}$. However, Eq.~\ref{eq:likelihood-cascade-infections-multiplicative} only considers infected nodes. The fact that some nodes are \emph{not} infected by the contagion
is also informative. We then add survival terms 
for any noninfected node $n$ ($t_n > T$, or equivalently $t_n=\infty$). We now reparameterize $\beta_{ji}$ to $\alpha_{ji} = \log(\beta_{ji})$ and apply logarithms to compute the log-likelihood of a cascade as,
\begin{multline} \label{eq:log-likelihood-cascade-multiplicative}
\log f(\mathbf{t} ; \alphs) = \sum_{i : t_i < T} \sum_{k : t_k < t_i} \alpha_{ki} + \sum_{i : t_i < T} \log(\alpha_{0i}(t_i)) \\
- \sum_{i : t_i < T} \sum_{j : t_j \leq t_i} e^{\sum_{k : t_k < t_j, k > 0} \alpha_{ki}} \int_{t_{j-1}}^{t_j} \alpha_{0i}(t) \,dt \\
- \sum_{n : t_n > T} \sum_{j : t_j \leq T} e^{\sum_{k : t_k < t_j, k > 0} \alpha_{ki}} \int_{t_{j-1}}^{t_j} \alpha_{0i}(t) \,dt,
\end{multline}
where $\alphs := \left[ \boldsymbol{\alpha_i} \right] \in \mathbb{R}^{N\times N}$ and $\alpha_{ii} = 0$. The first three terms represent the infected nodes and the last term represents the surviving ones up to the observation 
window cut-off $T$. 
%
Assuming independent cascades, the log-likelihood of a set of cascades $C$ is the sum of the log-likelihoods of the individual cascades given by Eq.~\ref{eq:log-likelihood-cascade-multiplicative}. Then, we apply the 
maximum likelihood principle on the log-likelihood of the set of cascades to find the optimal parameters $\boldsymbol{\alpha}_i$ of every node $i$:
\begin{equation}
	\label{eq:opt-problem-multiplicative}
	\begin{array}{ll}
		\mbox{minimize$_{\alphs}$} & - \sum_{c \in C} \log f(\casc^c;\alphs)
	\end{array}
\end{equation}
%
The solution to Eq.~\ref{eq:opt-problem-multiplicative} is unique and computable:
\begin{theorem} \label{th:network-inference-multiplicative}
	The network inference problem under the multiplicative risk model defined in Eq.~\ref{eq:opt-problem-multiplicative} is convex in $\alphs$.
\end{theorem}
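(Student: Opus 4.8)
The plan is to show that the objective of Eq.~\ref{eq:opt-problem-multiplicative}, namely $-\sum_{c \in C} \log f(\casc^c; \alphs)$, is a convex function of the matrix $\alphs$. Since a nonnegative sum of convex functions is convex, it suffices to prove that the negative log-likelihood $-\log f(\mathbf{t}; \alphs)$ of a \emph{single} cascade, given by Eq.~\ref{eq:log-likelihood-cascade-multiplicative}, is convex in $\alphs$; the full result then follows by summing over the cascades in $C$.

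First I would analyze Eq.~\ref{eq:log-likelihood-cascade-multiplicative} term by term, after flipping its sign. The first term, $-\sum_{i:t_i<T}\sum_{k:t_k<t_i}\alpha_{ki}$, is a linear (hence convex) function of the entries of $\alphs$. The second term, $-\sum_{i:t_i<T}\log(\alpha_{0i}(t_i))$, does not depend on $\alphs$ at all: it is a constant determined entirely by the chosen baseline $\alpha_{0i}(\cdot)$ and the observed infection times, and is therefore trivially convex.

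The crux of the argument lies in the third and fourth terms. In each of them the quantity $\sum_{k:t_k<t_j,\,k>0}\alpha_{ki}$ appearing in the exponent is an affine function of $\alphs$, and the exponential of an affine function is convex. Each such exponential is multiplied by the factor $\int_{t_{j-1}}^{t_j}\alpha_{0i}(t)\,dt$, which is a \emph{nonnegative} constant because the baseline satisfies $\alpha_{0i}(t)\geq 0$. After flipping the sign these terms enter the negative log-likelihood with a positive coefficient, so they are nonnegative-weighted sums of convex functions and are thus convex. Adding the linear term, the constant term, and these convex terms yields a convex function, which proves the claim.

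I expect the only real subtlety to be recognizing that the reparameterization $\alpha_{ji}=\log(\beta_{ji})$ is precisely what makes convexity possible: it turns the products $\prod_k \beta_{ki}$ from Eq.~\ref{eq:likelihood-cascade-infections-multiplicative} into sums $\sum_k \alpha_{ki}$ inside the exponent, so that the hazard contributions become exponentials of affine functions rather than products of decision variables. The remaining obstacle is purely bookkeeping: verifying that every integral factor $\int_{t_{j-1}}^{t_j}\alpha_{0i}(t)\,dt$ is nonnegative, which is guaranteed by $\alpha_{0i}(t)\geq 0$, so that no convexity-violating negative weights appear. Everything else follows from the standard composition rules for convex functions.
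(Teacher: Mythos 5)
Your proposal is correct and follows exactly the route the paper compresses into one line: after the reparameterization $\alpha_{ji}=\log(\beta_{ji})$, the negative log-likelihood of each cascade decomposes into a linear term, a constant, and nonnegatively weighted exponentials of affine functions of $\alphs$, all convex, and summing over cascades preserves convexity. Your added observation that the log-reparameterization is what converts products of decision variables into exponentials of affine functions is accurate and is precisely the implicit step behind the paper's appeal to ``convexity of the exponential.''
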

\begin{proof} 
Result follows from linearity, composition rules for convexity, and convexity of the exponential.
\end{proof}

Model parameters have natural interpretation. If $\alpha_{ji} > 0$, node $j$ increases the hazard rate of node $i$ (positive influence), if $\alpha_{ji} < 0$, node $j$ decreases the hazard rate of node $i$ (negative influence), 
and finally if a parameter $\alpha_{ji} = 0$, node $j$ does not have any influence on $i$ -- there is no edge between $j$ and $i$. 

However, there are some undesirable properties of the solution to the multiplicative risk model as defined 
by Eq.~\ref{eq:opt-problem-multiplicative}. The optimal network will be dense: any pair of nodes $(j,i)$ that are not infected by the same contagion at least once will have negative influence on each other. Even worse, the 
negative influence between those pairs of nodes will be arbitrarily large, making the optimal solution unbounded. 
We propose the following solution to this issue. If pair $(j, i)$ does not get infected in any common cascades, we set $\alpha_{ji}$ to zero and do not include it in the log-likelihood computation. This rules out interactions between nodes that got infected in disjoint sets of cascades and avoids unbounded optimal solutions. In other words, we assume that if node $j$ has (positive or negative) influence on node $i$, then $i$ and $j$ should get infected by at least one common contagion and naturally $j$ should get infected before $i$. By ruling out interactions between nodes that got infected in disjoint cascades we successfully reduce the network density of the optimal solution. However, the solution is not encouraged to be sparse yet. We achieve even greater sparsity by 
including L1-norm regularization term~\cite{boyd2004convex}. 
Therefore, we finally solve:
\begin{equation}
	\label{eq:opt-problem-multiplicative-positive}
	\begin{array}{ll}
		\mbox{minimize$_{\alphs}$} & - \sum_{c \in C} g(\casc^c;\alphs) + \lambda \sum_{j, i} |\alpha_{ji}|,
	\end{array}
\end{equation}
where $\lambda$ is a sparsity penalty parameter and $g(\casc^c;\alphs)$ is the log-likelihood of cascade $\casc^c$ which omits pa\-ra\-meters $\alpha_{ji}$ of pairs $(j,i)$ that did not get infected by at least one common contagion 
(and $t_j < t_i$). The above problem is convex by using the same reasoning as in Th.~\ref{th:network-inference-multiplicative}.
Finally, we note that by introducing a L1-norm regularization term, we are essentially assuming Laplacian prior over $\alphs$. Depending on the domain, other priors may be more appropriate; as long as they are jointly log-concave on $\alphs$, the network inference problem will still be convex.

\section{Experimental evaluation}
\label{sec:evaluation}
We evaluate the performance of both the additive and the multiplicative model on synthetic networks that mimic the structure of real networks as well as on a dataset of more than 10 million information cascades spreading 
between 3.3 million websites over a 4 month period\footnote{Available at http://snap.stanford.edu/infopath/}.
\begin{figure}[t]
	\centering
	\subfigure[Edge accuracy (C-P)]{\includegraphics[width=0.22\textwidth]{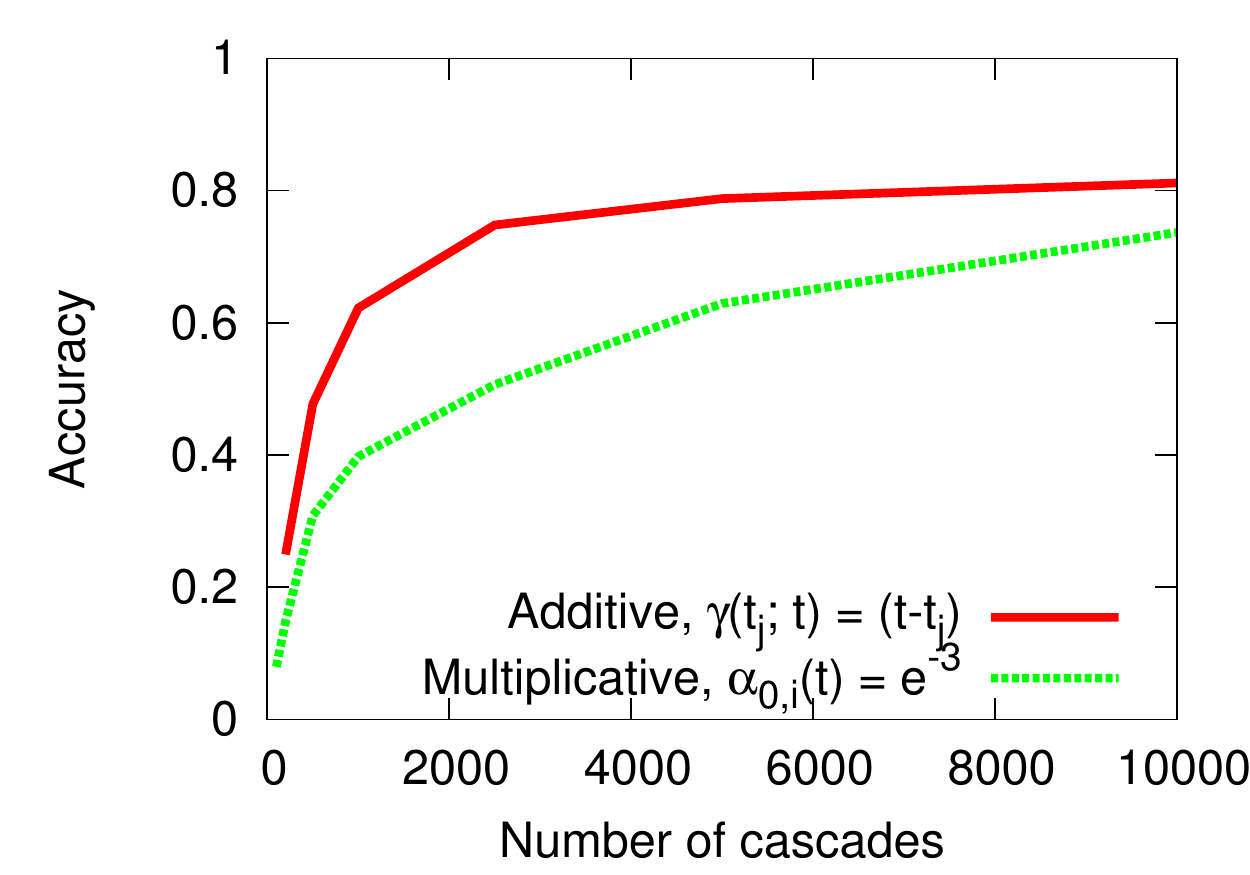}
	\label{fig:acc-cp-vs-num-cascades}}
	\subfigure[Edge accuracy (HI)]{\includegraphics[width=0.22\textwidth]{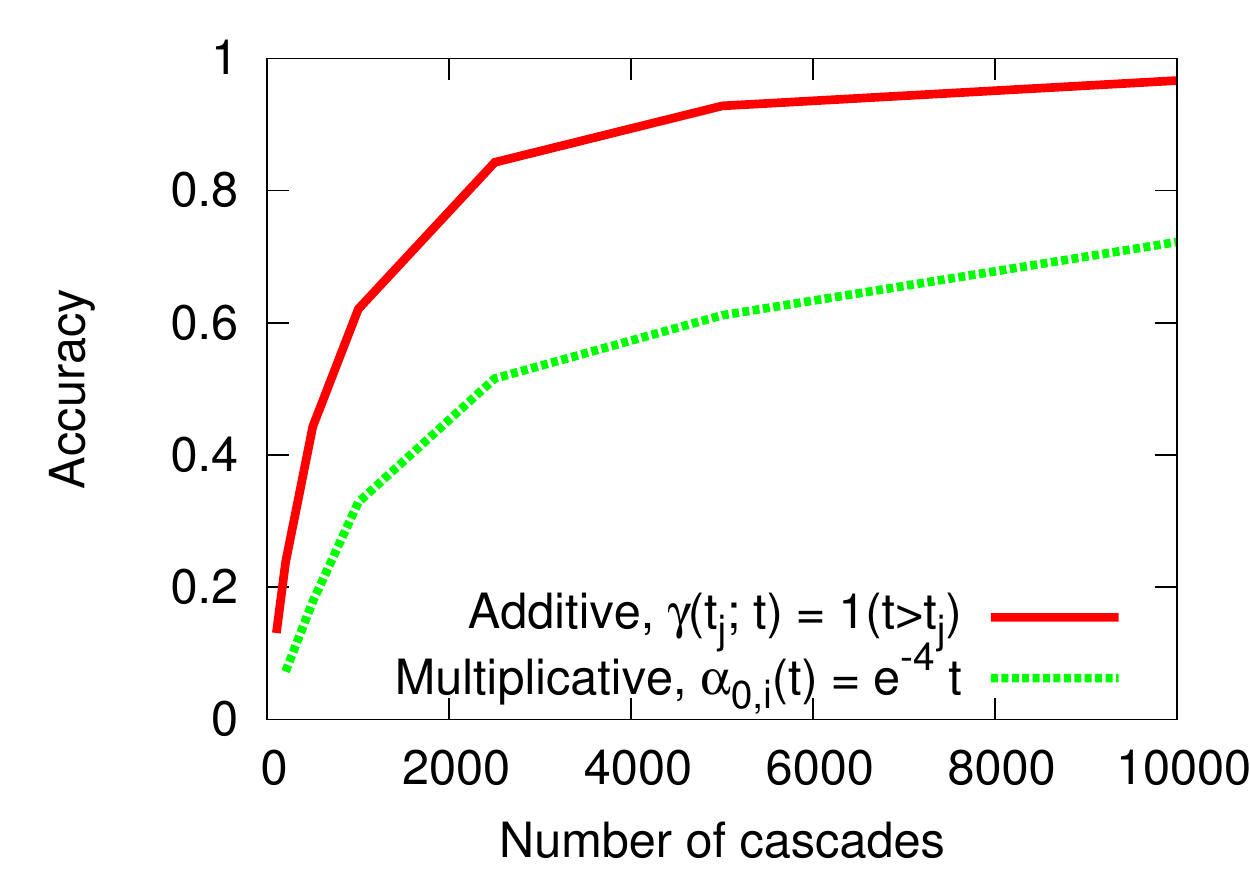}
	\label{fig:acc-hi-vs-num-cascades}}
	\\ \vspace{-2mm}
	\subfigure[MSE (C-P)]{\includegraphics[width=0.22\textwidth]{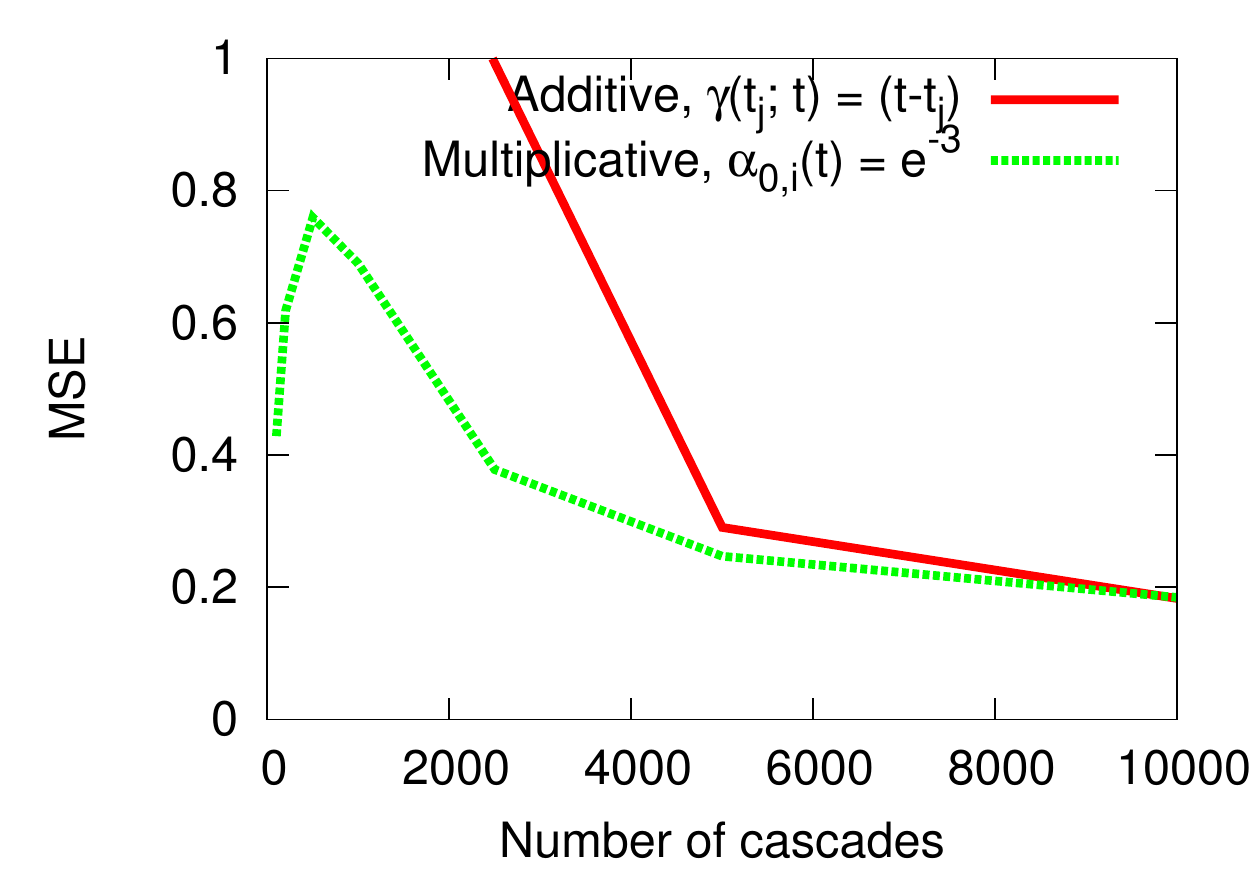}
	\label{fig:mse-cp-vs-num-cascades}}
	\subfigure[MSE (HI)]{\includegraphics[width=0.22\textwidth]{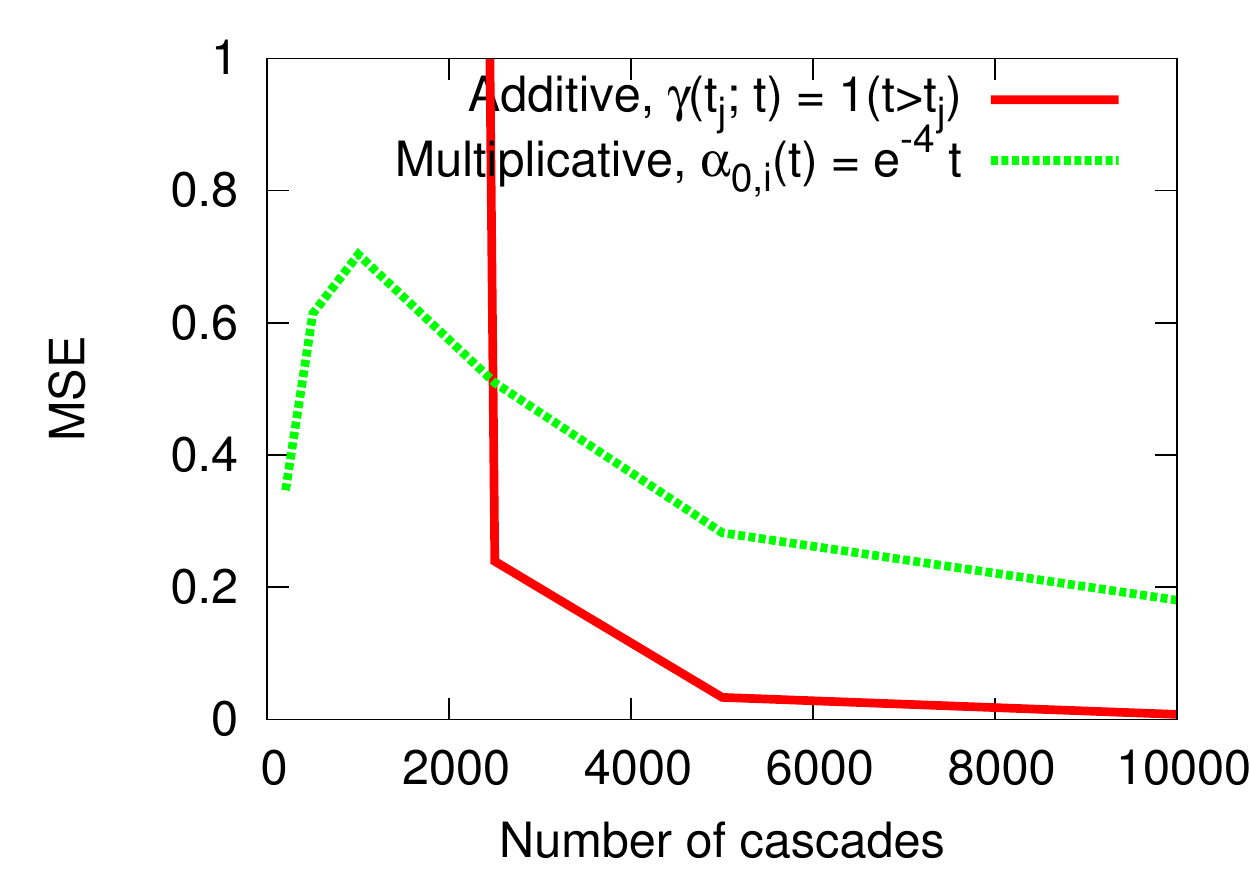}
	\label{fig:mse-hi-vs-num-cascades}}
	\vspace{-3mm}
 	\caption{
 	Edge accuracy and MSE of our inference me\-thods for additive and multiplicative propagation models against number of cascades. We used 
	1,024 node Core-Periphery (C-P) and Hierarchical (HI) Kronecker networks with an average of four edges per node and $T = 4$.
	} \label{fig:performance-vs-num-cascades}
\end{figure}

\subsection{Experiments on synthetic data} \label{sec:synthetic-experiments}
In this section, we compare the performance of our inference algorithms for additive and multiplicative models for different network structures, time shaping functions, baselines, and 
observation windows. 
We skip a comparison to other methods such as \netrate, \kernelcascade, \monet or \infopath since our model is able to mimic these methods by simply choosing the appropriate time shaping 
function $\gamma(\cdot ; \cdot)$, Table~\ref{tab:mapping}. Rather, we focus on comparing multiplicative and additive models.

\begin{figure}[t]
	\centering
	\subfigure[Edge accuracy]{\includegraphics[width=0.22\textwidth]{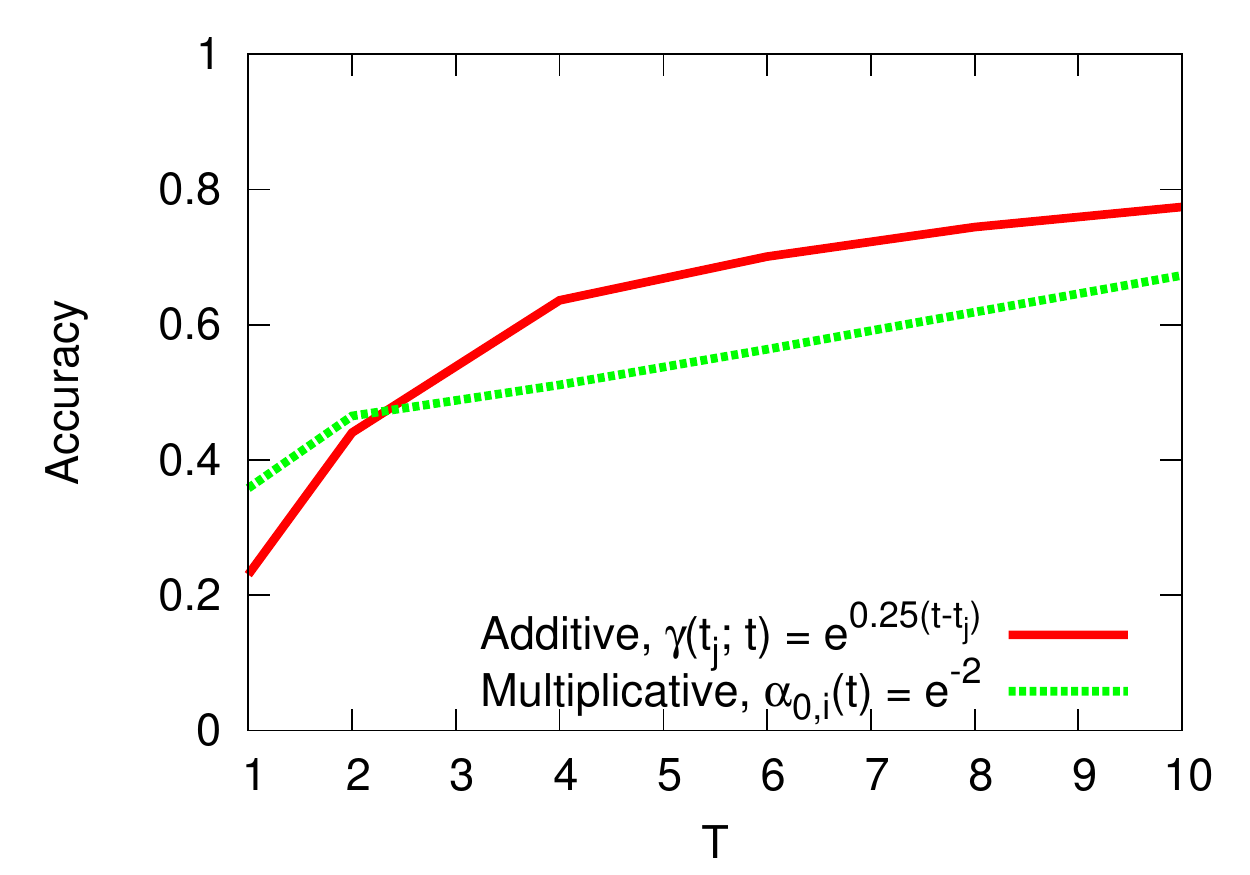}
	\label{fig:acc-random-vs-T}}
	\subfigure[MSE]{\includegraphics[width=0.22\textwidth]{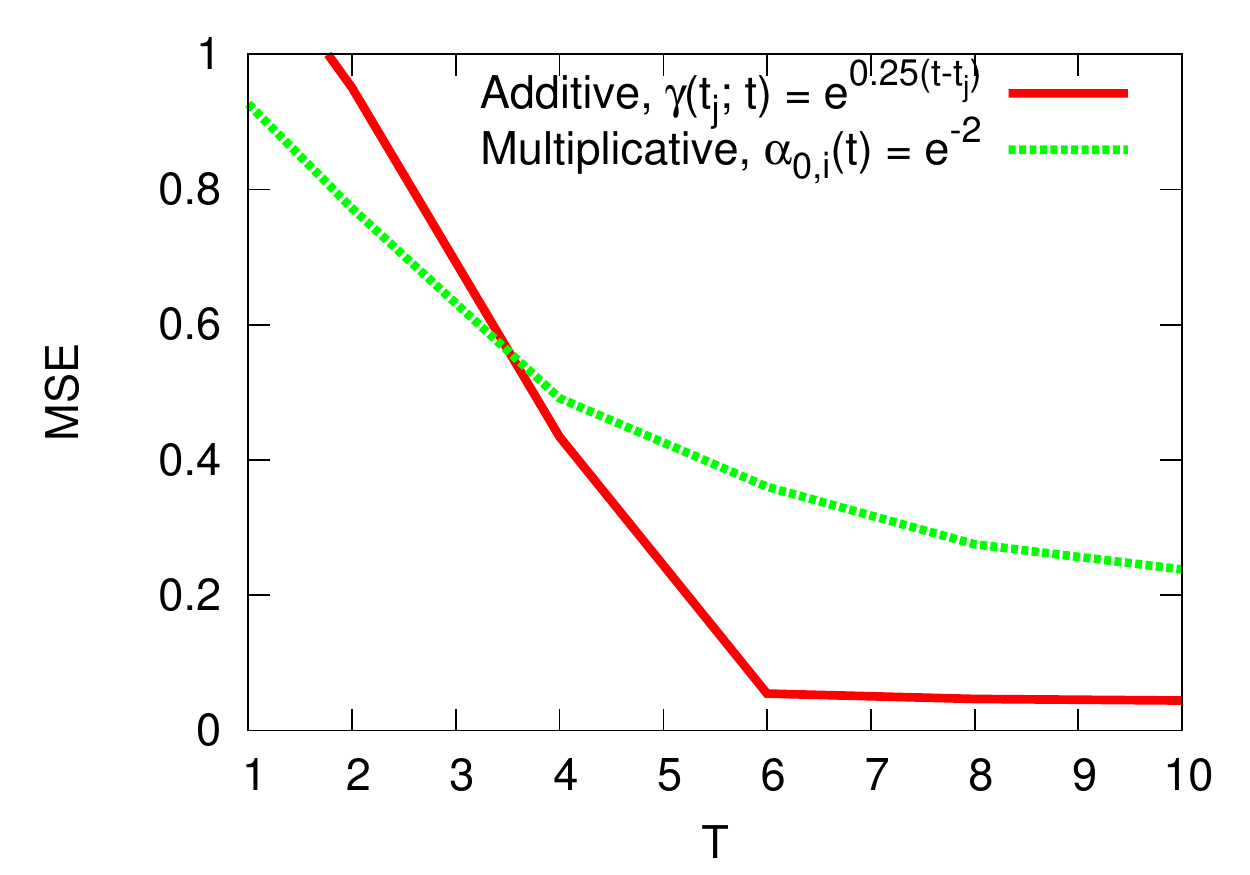}
	\label{fig:mse-random-vs-T}}
	\vspace{-3mm}
	\caption{
	Edge accuracy and MSE of our inference me\-thods for additive and multiplicative propagation models against observation window. We used 
	a 1,024 node Random Kronecker network with an average of 4 edges per node.
	} \label{fig:performance-vs-T}
\end{figure}

\xhdr{Experimental setup} First we generate realistic synthetic networks using the Kronecker graph model~\citep{leskovec2010kronecker}, and set the edge ha\-zard function parameters $\alpha_{j,i}$ randomly, drawn 
from a uniform distribution.
We then simulate and record a set of cascades propagating over the network using the additive or the multiplicative model.
For each cascade we pick the cascade initiator node uniformly at random and generate the infection times following a similar procedure as in~\citet{austin2012generating}: We draw a uniform random variable 
per node, and then use inverse transform sampling~\cite{devroye1986non} to generate piecewise likelihoods of node infections. Note that every time a parent of node $i$ gets infected, we need to consider a new interval in the piecewise likelihood of infection of node $i$.

\xhdr{Performance vs. number of cascades} We eva\-luate our inference methods by computing two di\-ffe\-rent measures: edge accuracy and mean squared error\- (MSE). Edge accuracy quantifies the fraction of edges the method was able to infer correctly: $1-\frac{\sum_{i,j} |I(\alpha^*_{i,j})-I(\hat{\alpha}_{i,j})|}{\sum_{i,j}I(\alpha^*_{i,j}) + \sum_{i,j} I(\hat{\alpha}_{i,j})}$, where $I(\alpha)=1$ if $\alpha > 0$ and $I(\alpha)=0$ otherwise. 
The MSE quantifies the error in the estimates of parameters $\alpha$: 
$E\big[(\alpha^*-\hat{\alpha})^2\big]$, where $\alpha^*$ is the true parameter of the model and $\hat{\alpha}$ is the estimated parameter. 

Figure~\ref{fig:performance-vs-num-cascades} shows the edge accuracy and the MSE against cascade size for two types of Kronecker networks: hie\-rar\-chi\-cal and core-periphery, using different additive and multiplicative propagation models.  
Comparing additive and multiplicative models we find that in order to infer the networks to same accuracy the multiplicative model requires more data. This means it is more difficult to discover the network and fit the parameters for the multiplicative model than for the additive model. Moreover, estimating the value of the model parameters is considerably harder than simply discovering edges and therefore more cascades are needed for accurate estimates.

\xhdr{Performance vs. observation window length} Lengthening the observation window increases the number of observed infections and results in a more representative sample of the underlying dynamics. Therefore, it 
should intuitively result in more accurate estimates for both the additive and multiplicative models. Figure~\ref{fig:performance-vs-T} shows performance against di\-ffe\-rent observation window lengths for a random 
network~\citep{erdos60random}, using additive and multiplicative models over 1,000 cascades. The experimental results support the above intuition. However, given a sufficiently large observation window, 
increasing further the length of the window does not increase performance significantly, as observed in case of the additive model with exponential time shaping function.
\begin{table}[!!t]
    \small
    \begin{center}
    \begin{tabular*}{0.48\textwidth}{@{\extracolsep{\fill}} l c c}
	\toprule
  \textbf{Topic or news event $(\bf{Q})$} & \textbf{\# sites} & \textbf{\# memes} 
  \\
	\midrule
	Arab Spring & 950 & 17,975 \\
	Bailout & 1,127 & 36,863 \\
	Fukushima & 1,244 & 24,888\\
	Gaddafi & 1,068 & 38,166 \\
	Kate Middleton & 1,292 & 15,112 \\
	\bottomrule
    \end{tabular*}
    \end{center}
	\vspace{-4mm}
  \caption{Topic and news event statistics.}
  \label{tab:cascades}
  \vspace{-2mm}
\end{table}

\begin{figure*}[!t]
  \centering
  \small
  \subfigure[CS: Arab Spring]{\includegraphics[width=0.18\textwidth]{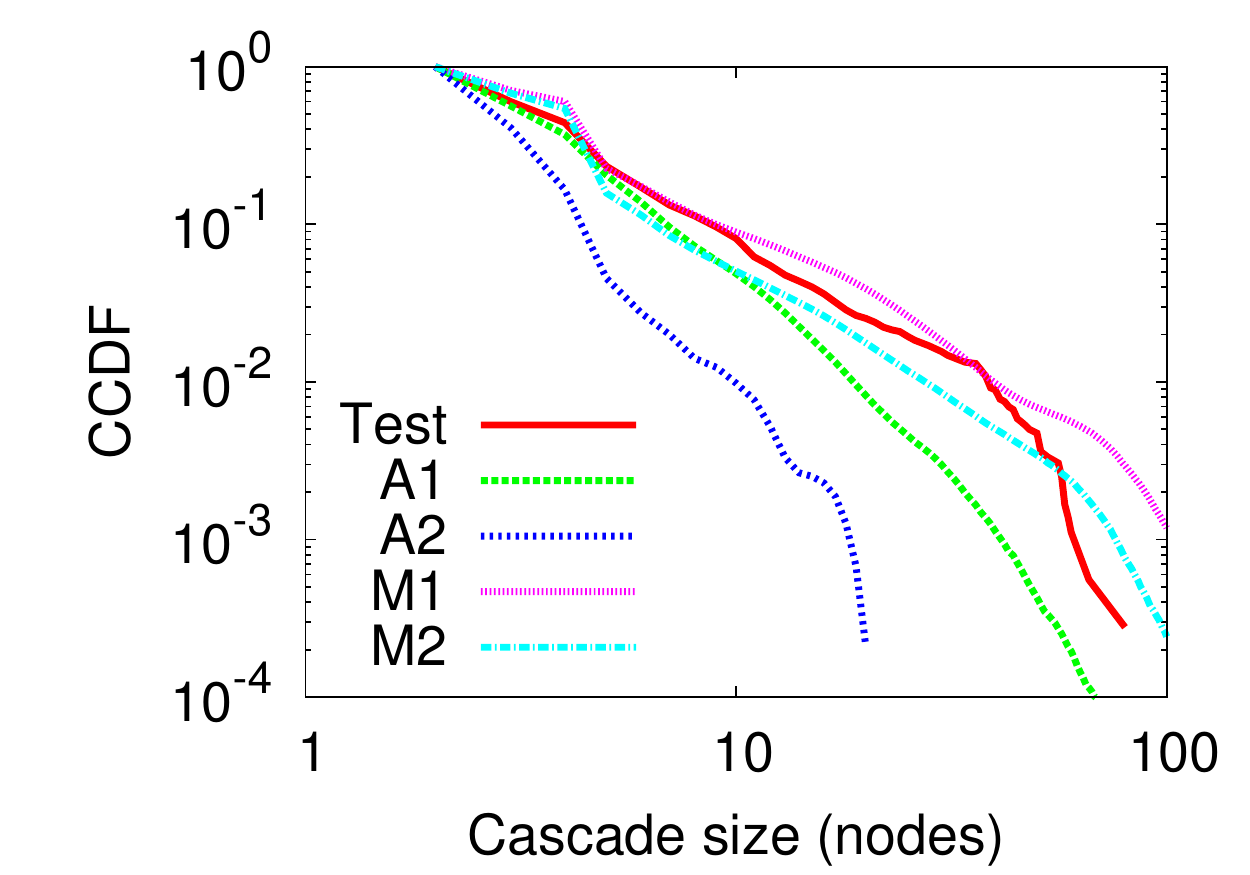} \label{fig:ttl-cascades-arab-spring}}
  \subfigure[CS: Bailout]{\includegraphics[width=0.18\textwidth]{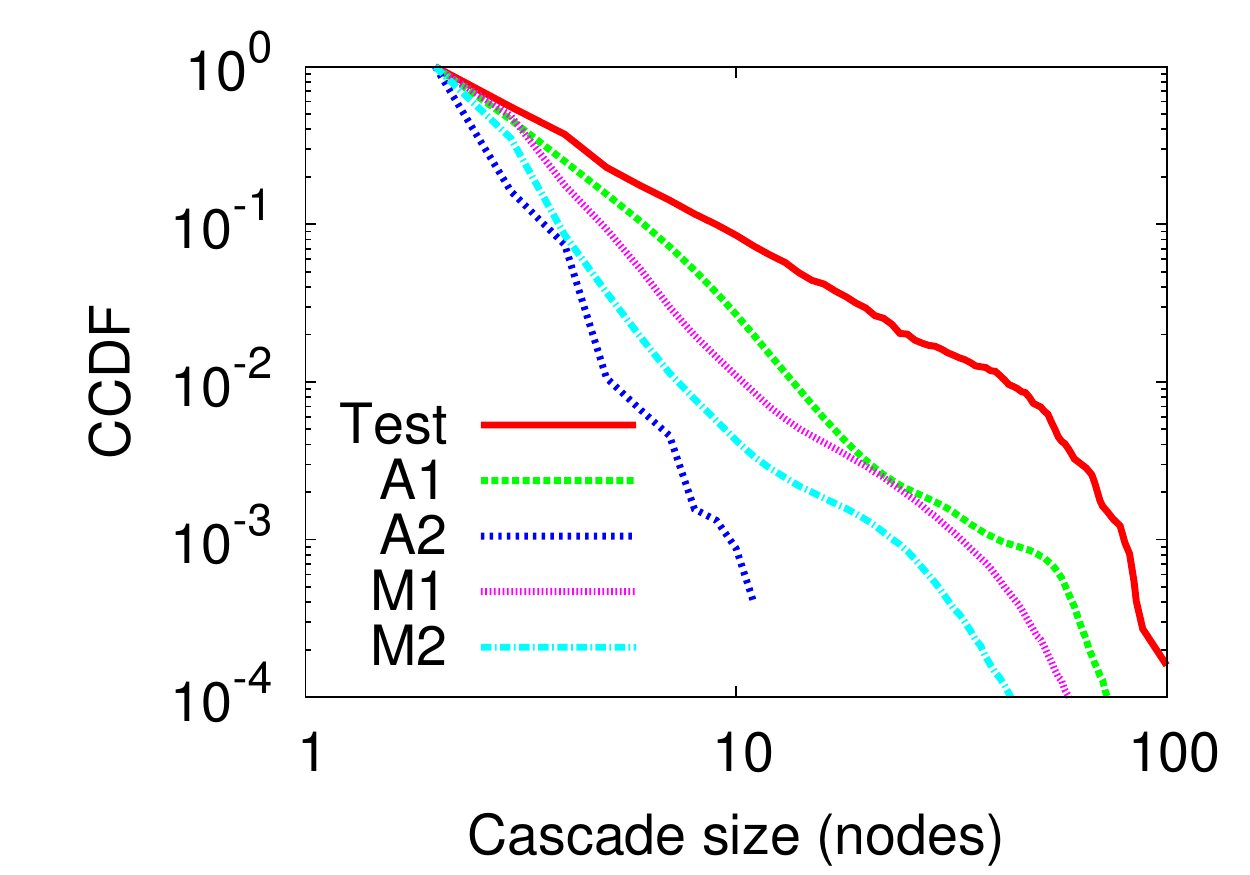} \label{fig:size-cascades-bailout}}
  \subfigure[CS: Fukushima]{\includegraphics[width=0.18\textwidth]{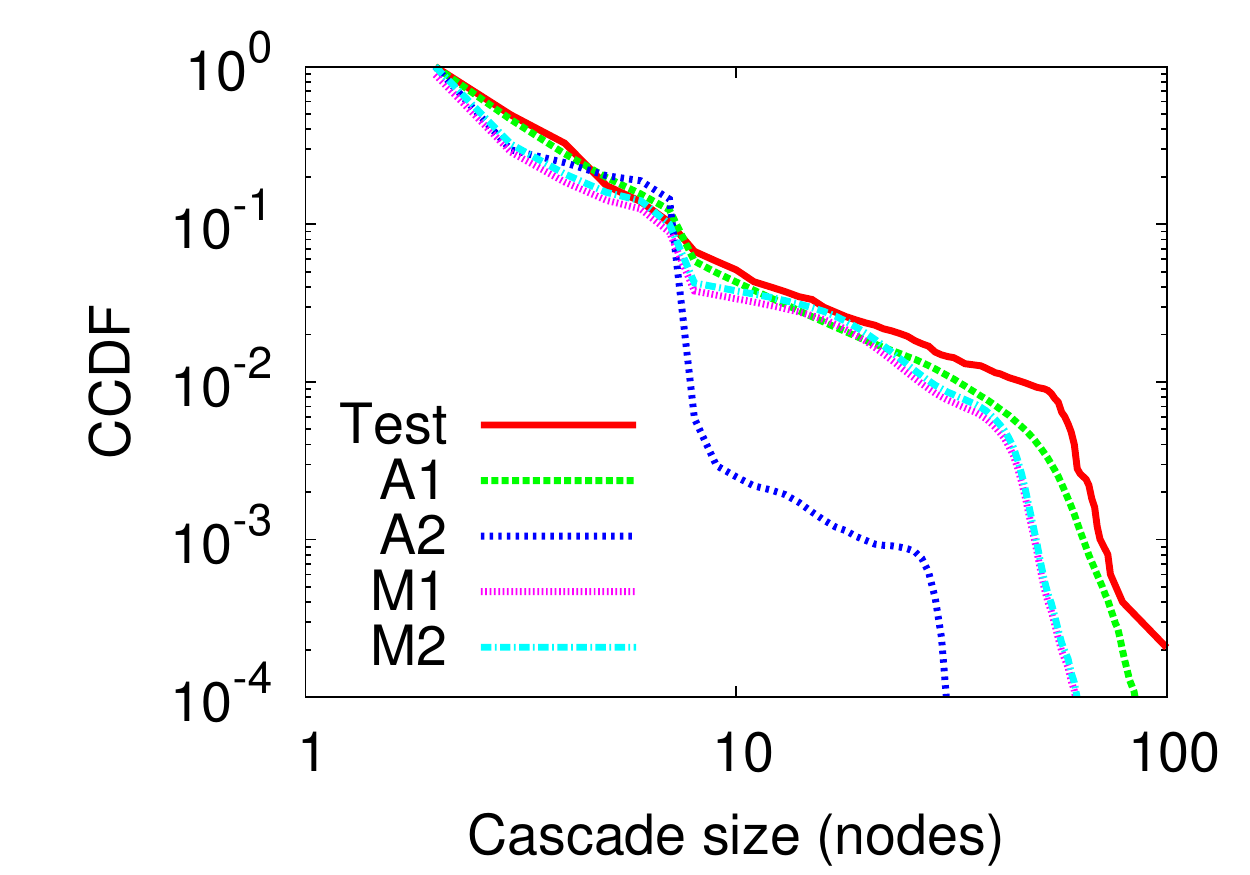} \label{fig:size-cascades-fukushima}}
  \subfigure[CS: Gaddafi]{\includegraphics[width=0.18\textwidth]{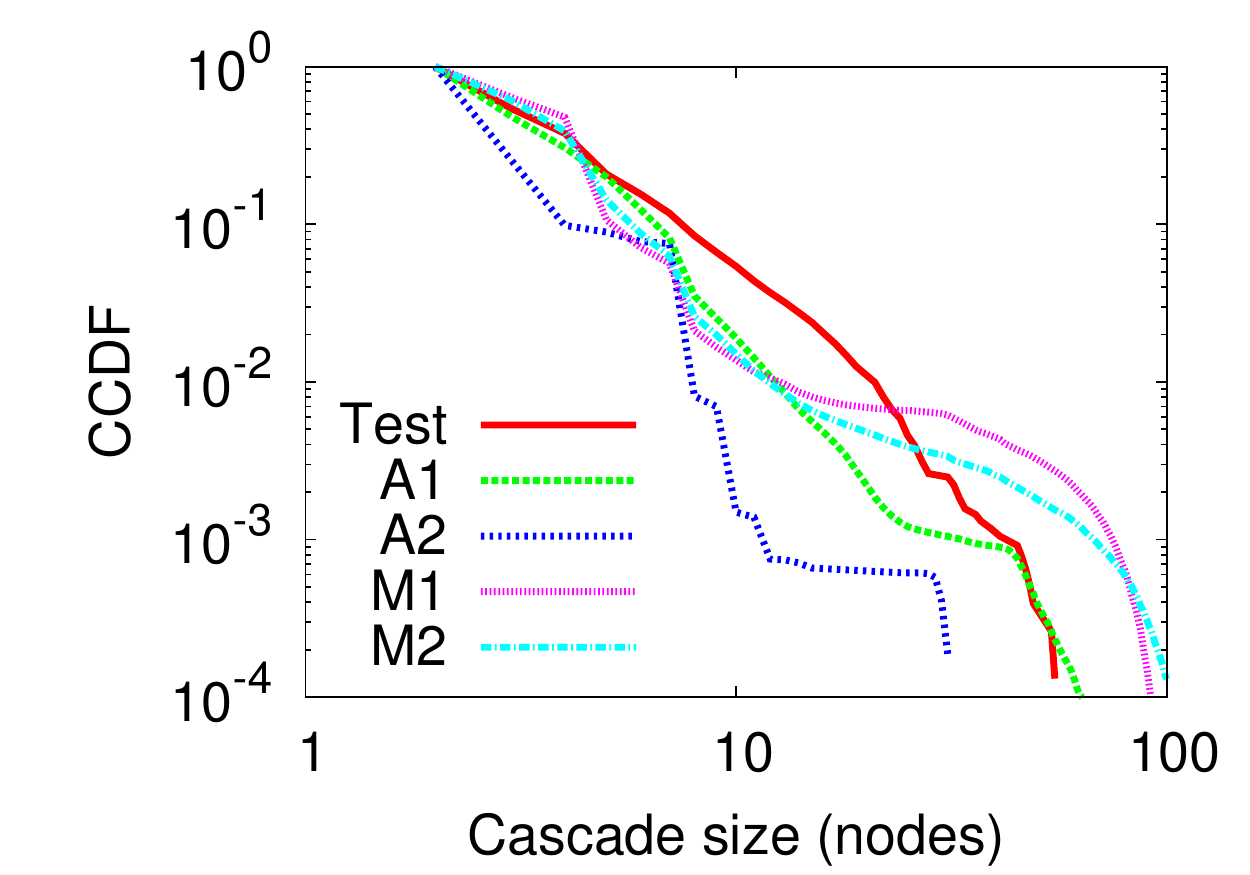} \label{fig:size-cascades-gaddafi}}
  \subfigure[CS: K. Middleton]{\includegraphics[width=0.18\textwidth]{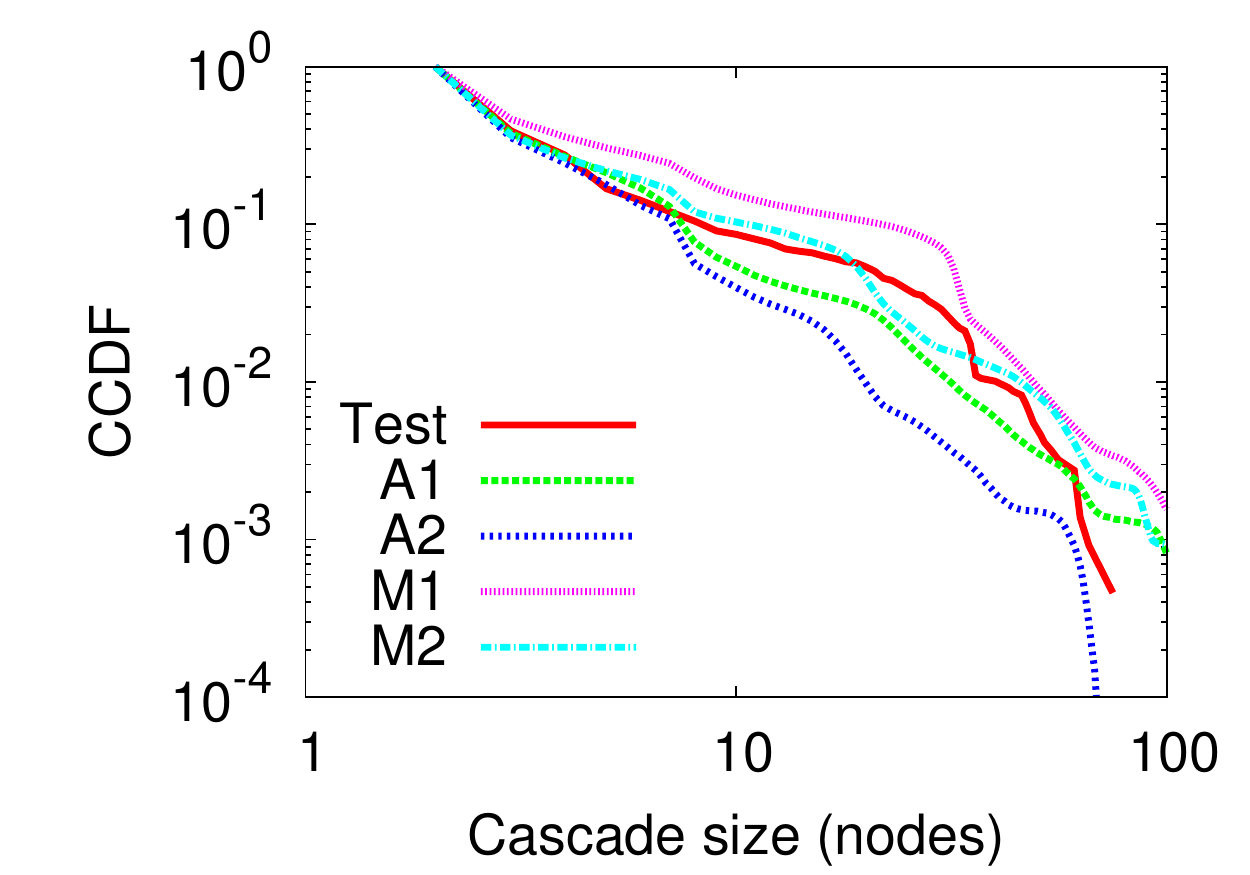} \label{fig:size-cascades-middleton}}\\ \vspace{-3mm}
  \subfigure[CD: Arab Spring]{\includegraphics[width=0.18\textwidth]{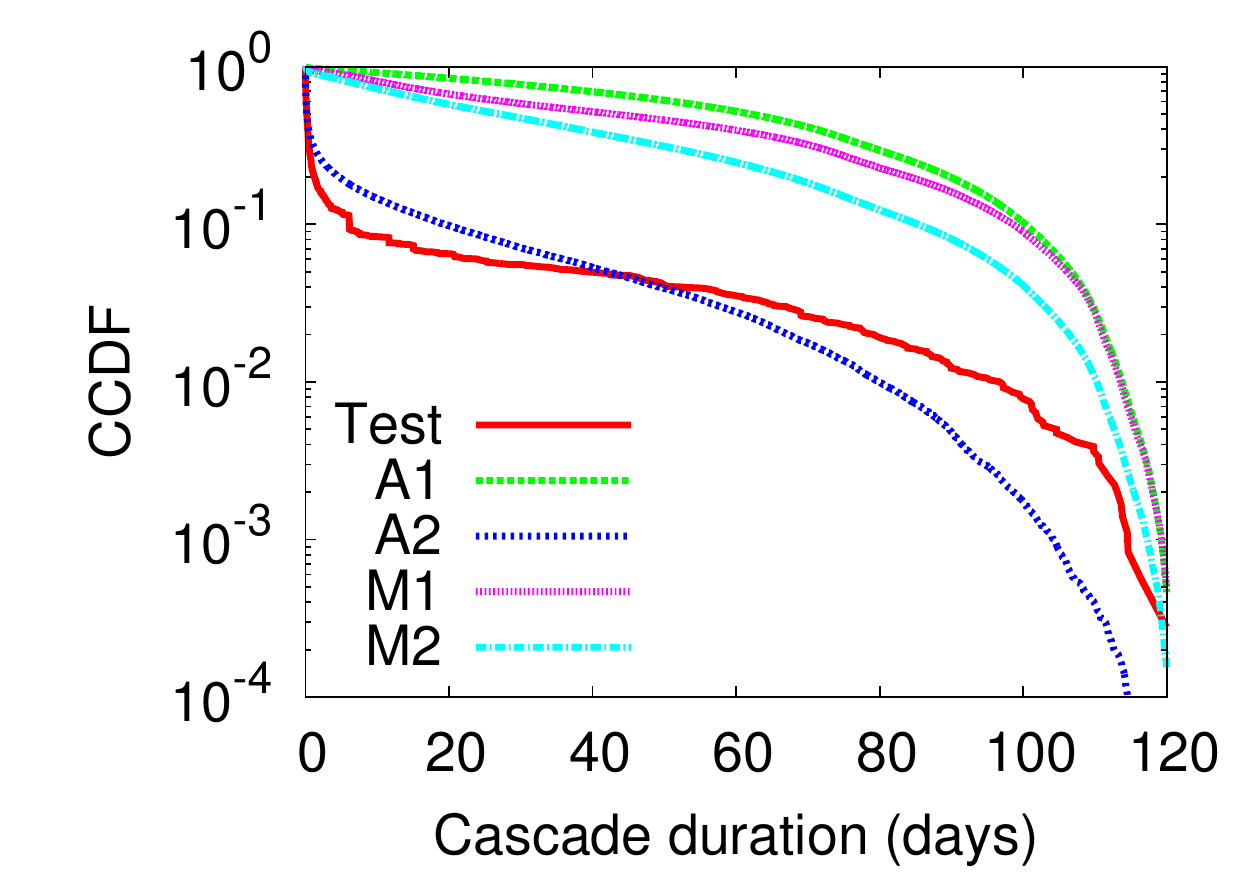} \label{fig:duration-cascades-arab-spring}}
  \subfigure[CD: Bailout]{\includegraphics[width=0.18\textwidth]{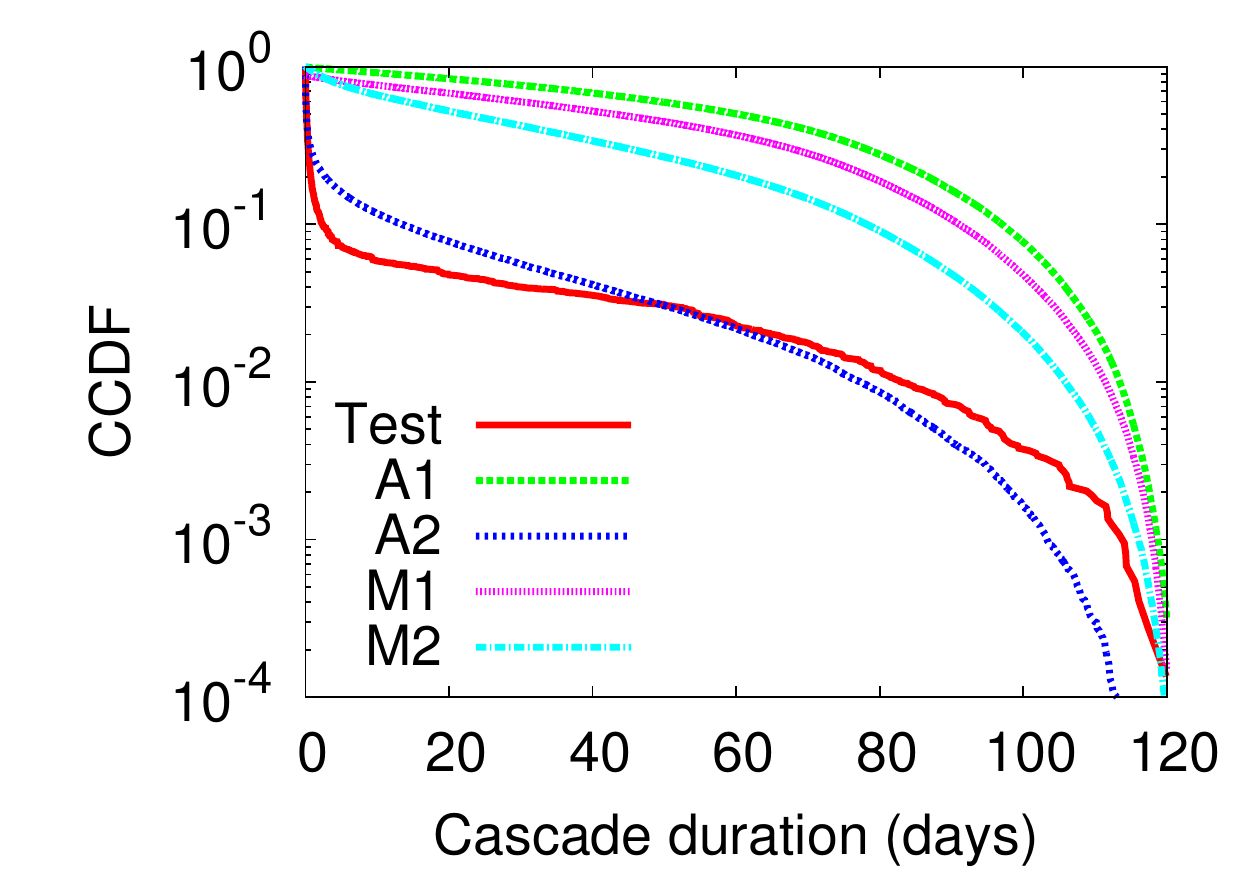} \label{fig:duration-cascades-bailout}}
  \subfigure[CD: Fukushima]{\includegraphics[width=0.18\textwidth]{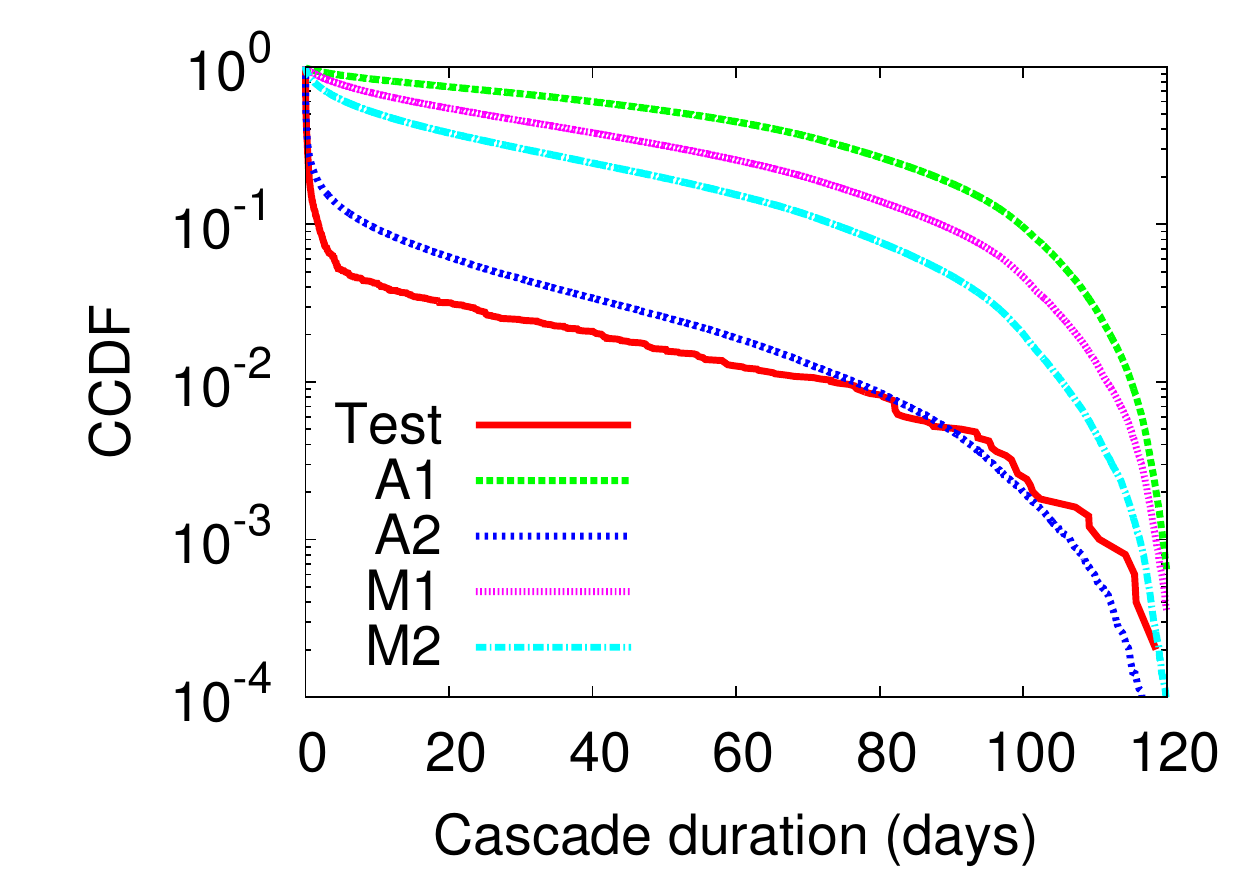} \label{fig:duration-cascades-fukushima}}
  \subfigure[CD: Gaddafi]{\includegraphics[width=0.18\textwidth]{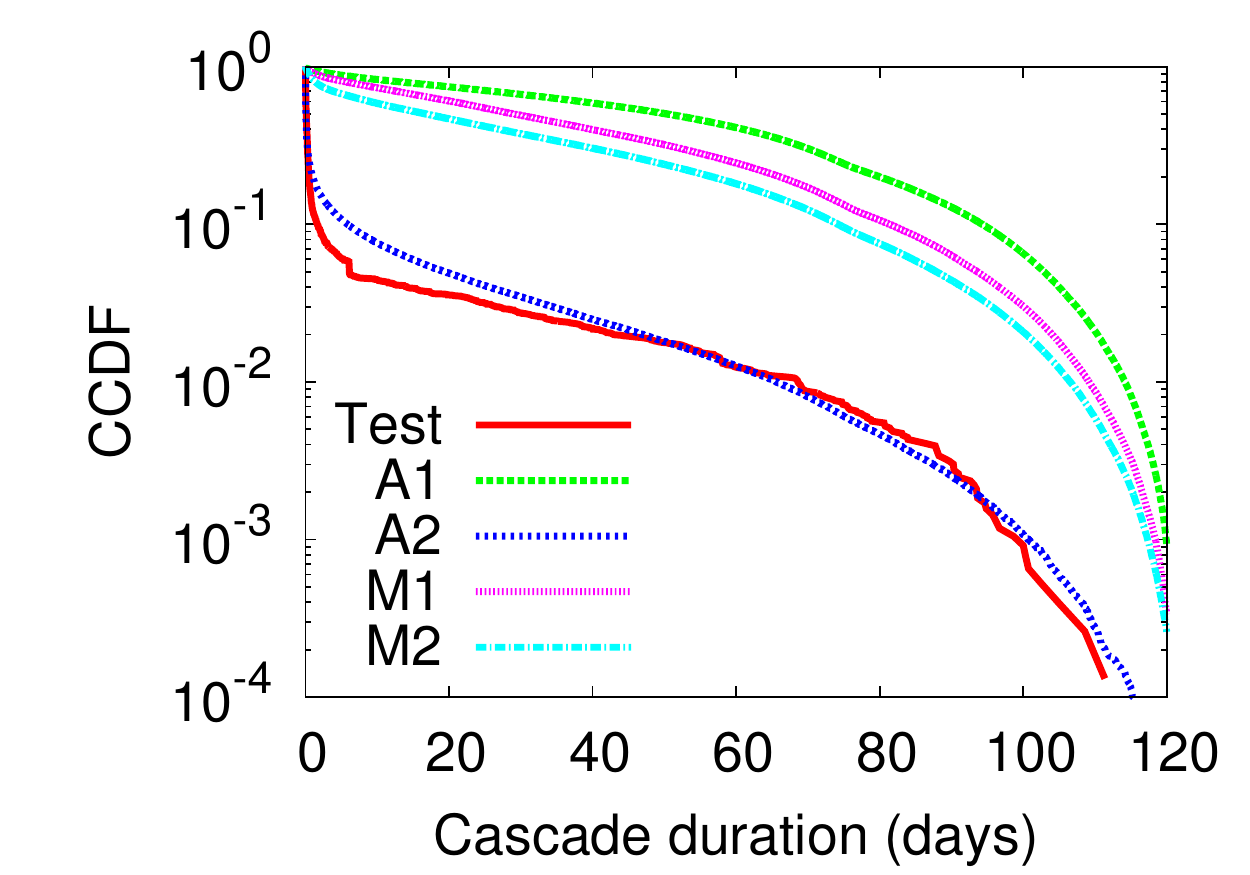} \label{fig:duration-cascades-gaddafi}}
  \subfigure[CD: K. Middleton]{\includegraphics[width=0.18\textwidth]{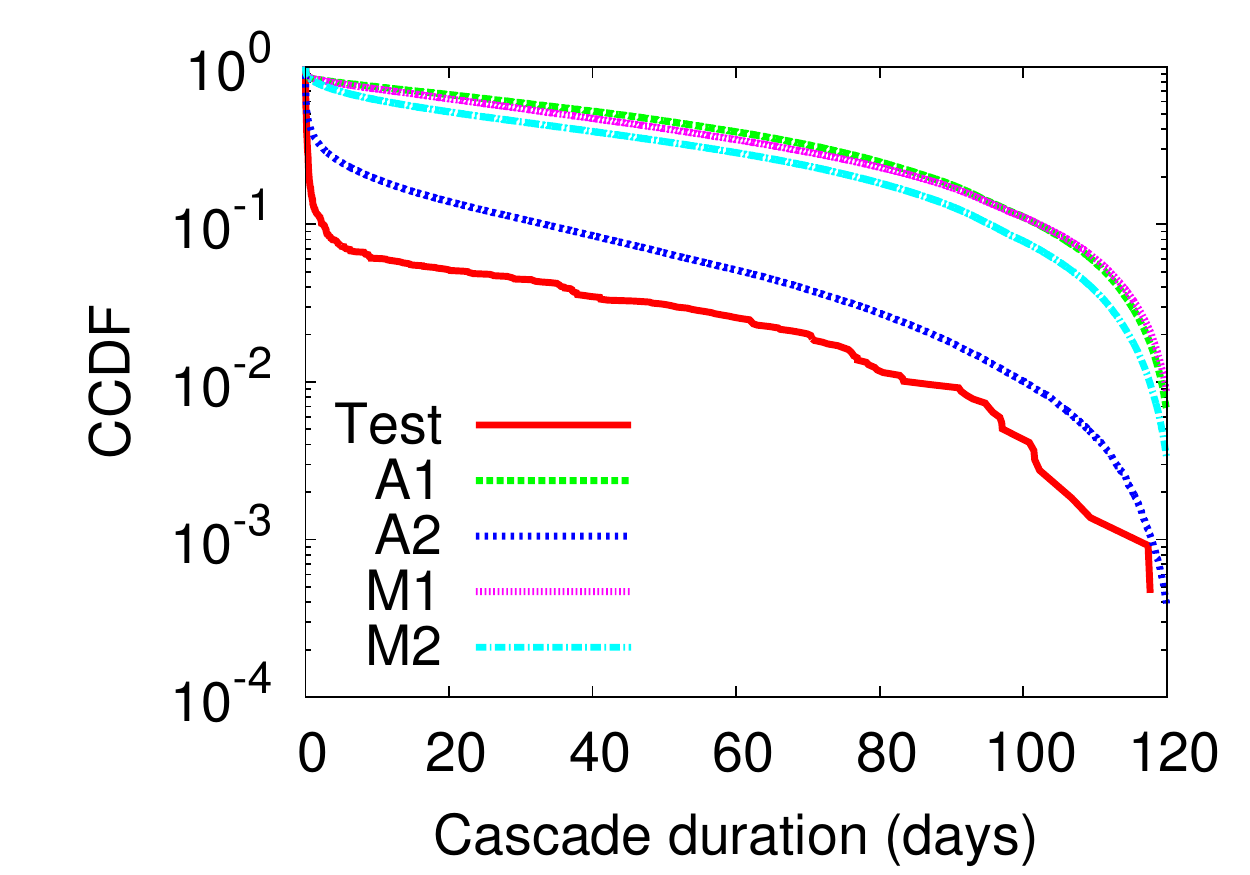} \label{fig:duration-cascades-middleton}} 
  \vspace{-3mm}
  \caption{Cascade size (CS) and cascade duration (CD) distributions for test sets on several topics. We compare the cascade test set (Test) against synthetic cascade sets, generated 
  using two additive models, one (A1) with $\gamma(t_j ; t_i) = I(t_i > t_j)$ and other (A2) with $\gamma(t_j ; t_i) = 1/(t_i - t_j)$, and two multiplicative models, one (M1) with 
  $\alpha_{0,i}(t_i) = e^{b}I(t_i > t_j)$ and another (M2) with $\alpha_{0,i}(t_i) = e^{b}/t_i$, where we set $b = -3$.
  }
\label{fig:size-duration-cascades}
\end{figure*}

\subsection{Experiments on real data} \label{sec:real-experiments}

\xhdr{Experimental setup} We trace the flow of information using \emph{memes}~\cite{leskovec2009kdd}. \emph{Memes} are short textual phrases (like, ``lipstick on a pig'') that 
travel through a set of blogs and mainstream media websites. We consider each meme $m$ as a separate information cascade $c_{m}$. Since all documents which contain memes are time-stamped, a cascade $c_{m}$ 
is simply a record of times when sites first mentioned meme $m$.
We use more than 10 million distinct memes from 3.3 million websites over a period of 4 months, from May 2011 till August 2011.

Our aim is to consider sites that actively spread memes over the Web, so we select the top 5,000 sites in terms of the number of memes they mentioned.
Moreover, we are interested in inferring propagation models related to particular topics or events. Therefore, we consider we are also given a keyword query $Q$ related to the event/topic of interest. When we infer the parameters of the additive or multiplicative models for a given query $Q$, we only consider documents (and the memes they mention) that include keywords in $Q$. Table~\ref{tab:cascades} summarizes the number of sites and meme cascades for several topics and real world events.

Unfortunately, true models (or ground truth) are unknown on real data. Previous network inference algorithms~\cite{manuel10netinf, manuel11icml, manuel13dynamic, 
lesong2012nips} have been validated using explicit hyperlink cascades. However, the use of hyperlinks to refer to the source of information is relatively 
rare (especially in mainstream media)~\cite{manuel10netinf} and hyperlinks only allow us to test for nodes which increase the instantaneous risk of infection of other nodes, as in the additive risk model, but not for nodes which either increase or decrease it.
To overcome this, we instead evaluate the predictive power of our models. For each query $Q$ we create a training and a test set of cascades. The training (test) set contains 80\% (20\%) of the recorded cascades for the event/topic of interest; both sets are disjoint and created at random. We use the training set to fit the parameters of our additive and multiplicative models and the test set to evaluate the models.

\xhdr{Cascade size prediction} For each event/topic of interest, we evaluate the predictive power of both mo\-dels, learned using the training set, by comparing the cascade size distribution of the test set against a synthetically generated cascade set using the trained models.
We build the synthetically generated cascade set by simulating a set of cascades starting from the true source nodes of the cascades in the test set using the model learned from the training set and an observation window equal to the one of the test set. Figures~\ref{fig:size-duration-cascades}(a-e) show the distribution of cascade sizes for the test sets and for the synthetically 
generated cascade sets using two different additive models and two multiplicative models. None of the models is a clear winner in terms of similarity with the test sets, but the additive model with inverse 
linear time shaping function (A2) tends to underestimate the cascade size. Surprisingly, the cascade size distributions in the synthetically generated cascade sets are very similar to the empirical distributions, specially up to 10 infected nodes per cascade.

\xhdr{Cascade duration prediction} Next we further evaluate the predictive power of both models, learned using the training sets, by comparing the cascade duration distribution of the test sets against synthetically generated cascade sets using the trained models. 
Figures~\ref{fig:size-duration-cascades}(f-j) show the distribution of the cascade duration of the test sets and the synthetically generated cascade sets using the same additive and multiplicative models. In this case, the performance of the models differs more dramatically. The additive model with inverse linear time shaping function gets the closest to the empirical distribution of the test 
set at the cost of underestimating the cascade size.

\section{Conclusion}
\label{sec:conclusions}
Our work here contributes towards a general mathematical theory of information propagation over networks while also providing flexible methods. Moreover, there are also many venues for future work. 
%
In the additive model, external influences that are endogenous to the network ~\cite{seth2012kdd} could be considered by including an extra additive term $\alpha_{i0}(t)$. In the multiplicative model, one could consider nonparametric baselines $\alpha_{i,0}(t)$ by fitting the model using partial likelihood.
Both models could be extended to include other types of covariates $\mathbf{s}(t)$ and also to consider time varying parameters $\alpha_{ji}(t)$ in order to infer dynamic networks~\cite{manuel13dynamic}.
Finally, developing goodness of fit tests would be useful to choose among models in a more principled manner.

\bibliographystyle{icml2013}
\bibliography{refs}

\begin{thebibliography}{23}
\providecommand{\natexlab}[1]{#1}
\providecommand{\url}[1]{\texttt{#1}}
\expandafter\ifx\csname urlstyle\endcsname\relax
  \providecommand{\doi}[1]{doi: #1}\else
  \providecommand{\doi}{doi: \begingroup \urlstyle{rm}\Url}\fi

\bibitem[Aalen et~al.(2008)Aalen, Borgan, and Gjessing]{aalen2008survival}
Aalen, O.O., Borgan, {\O}., and Gjessing, H.K.
\newblock \emph{Survival and event history analysis: a process point of view}.
\newblock Springer Verlag, 2008.

\bibitem[Austin(2012)]{austin2012generating}
Austin, P.C.
\newblock Generating survival times to simulate cox proportional hazards models
  with time-varying covariates.
\newblock \emph{Statistics in Medicine}, 2012.

\bibitem[Bailey(1975)]{bailey75mathematical}
Bailey, N. T.~J.
\newblock \emph{The Mathematical Theory of Infectious Diseases and its
  Applications}.
\newblock Hafner Press, 2nd edition, 1975.

\bibitem[Boyd \& Vandenberghe(2004)Boyd and Vandenberghe]{boyd2004convex}
Boyd, S.P. and Vandenberghe, L.
\newblock \emph{{Convex optimization}}.
\newblock Cambridge University Press, 2004.

\bibitem[Devroye(1986)]{devroye1986non}
Devroye, L.
\newblock \emph{Non-uniform random variate generation}, volume~4.
\newblock Springer-Verlag New York, 1986.

\bibitem[Du et~al.(2012)Du, Song, Smola, and Yuan]{lesong2012nips}
Du, N., Song, L., Smola, A., and Yuan, M.
\newblock Learning networks of heterogeneous influence.
\newblock In \emph{NIPS '12: Neural Information Processing Systems}, 2012.

\bibitem[Erd\H{o}s \& R\'{e}nyi(1960)Erd\H{o}s and R\'{e}nyi]{erdos60random}
Erd\H{o}s, P. and R\'{e}nyi, A.
\newblock On the evolution of random graphs.
\newblock \emph{Publication of the Mathematical Institute of the Hungarian
  Academy of Science}, 5:\penalty0 17--67, 1960.

\bibitem[Gomez-Rodriguez \& Sch\"{o}lkopf(2012)Gomez-Rodriguez and
  Sch\"{o}lkopf]{multitree12icml}
Gomez-Rodriguez, M. and Sch\"{o}lkopf, B.
\newblock {Submodular Inference of Diffusion Networks from Multiple Trees}.
\newblock In \emph{ICML '12: Proceedings of the 29th International Conference
  on Machine Learning}, 2012.

\bibitem[Gomez-Rodriguez et~al.(2010)Gomez-Rodriguez, Leskovec, and
  Krause]{manuel10netinf}
Gomez-Rodriguez, M., Leskovec, J., and Krause, A.
\newblock {Inferring Networks of Diffusion and Influence}.
\newblock In \emph{KDD '10: Proceedings of the 16th ACM SIGKDD International
  Conference on Knowledge Discovery and Data Mining}, 2010.

\bibitem[Gomez-Rodriguez et~al.(2011)Gomez-Rodriguez, Balduzzi, and
  Sch\"{o}lkopf]{manuel11icml}
Gomez-Rodriguez, M., Balduzzi, D., and Sch\"{o}lkopf, B.
\newblock {Uncovering the Temporal Dynamics of Diffusion Networks}.
\newblock In \emph{ICML '11: Proceedings of the 28th International Conference
  on Machine Learning}, 2011.

\bibitem[Gomez-Rodriguez et~al.(2013)Gomez-Rodriguez, Leskovec, and
  Sch\"{o}lkopf]{manuel13dynamic}
Gomez-Rodriguez, M., Leskovec, J., and Sch\"{o}lkopf, B.
\newblock {Structure and Dynamics of Information Pathways in On-line Media}.
\newblock In \emph{WSDM '13: Proceedings of the 6th ACM International
  Conference on Web Search and Data Mining}, 2013.

\bibitem[Kempe et~al.(2003)Kempe, Kleinberg, and Tardos]{kempe03maximizing}
Kempe, D., Kleinberg, J.~M., and Tardos, \'{E}.
\newblock Maximizing the spread of influence through a social network.
\newblock In \emph{KDD '03: Proceedings of the 9th ACM SIGKDD International
  Conference on Knowledge Discovery and Data Mining}, 2003.

\bibitem[Leskovec et~al.(2006)Leskovec, Adamic, and Huberman]{jure06viral}
Leskovec, J., Adamic, L.~A., and Huberman, B.~A.
\newblock The dynamics of viral marketing.
\newblock In \emph{EC' 06: Proceedings of the eigth International Conference on
  Electronic Commerce}, 2006.

\bibitem[Leskovec et~al.(2009)Leskovec, Backstrom, and
  Kleinberg]{leskovec2009kdd}
Leskovec, J., Backstrom, L., and Kleinberg, J.
\newblock Meme-tracking and the dynamics of the news cycle.
\newblock In \emph{KDD '09: Proceedings of the 15th ACM SIGKDD International
  Conference on Knowledge Discovery and Data Mining}, 2009.

\bibitem[Leskovec et~al.(2010)Leskovec, Chakrabarti, Kleinberg, Faloutsos, and
  Ghahramani]{leskovec2010kronecker}
Leskovec, J., Chakrabarti, D., Kleinberg, J., Faloutsos, C., and Ghahramani, Z.
\newblock {Kronecker graphs: An approach to modeling networks}.
\newblock \emph{Journal of Machine Learning Research}, 11:\penalty0 985--1042,
  2010.

\bibitem[Liben-Nowell \& Kleinberg(2008)Liben-Nowell and
  Kleinberg]{nowell08letter}
Liben-Nowell, David and Kleinberg, Jon.
\newblock Tracing the flow of information on a global scale using {I}nternet
  chain-letter data.
\newblock \emph{Proceedings of the National Academy of Sciences}, 105\penalty0
  (12):\penalty0 4633--4638, 2008.

\bibitem[Myers \& Leskovec(2010)Myers and Leskovec]{meyers10netinf}
Myers, S. and Leskovec, J.
\newblock {On the Convexity of Latent Social Network Inference}.
\newblock In \emph{NIPS '10: Neural Information Processing Systems}, 2010.

\bibitem[Myers et~al.(2012)Myers, Leskovec, and Zhu]{seth2012kdd}
Myers, S., Leskovec, J., and Zhu, C.
\newblock {Information Diffusion and External Influence in Networks}.
\newblock In \emph{KDD '12: Proceedings of the 18th ACM SIGKDD International
  Conference on Knowledge Discovery and Data Mining}, 2012.

\bibitem[Netrapalli \& Sanghavi(2012)Netrapalli and Sanghavi]{netrapalli12}
Netrapalli, P. and Sanghavi, S.
\newblock Finding the graph of epidemic cascades.
\newblock In \emph{ACM SIGMETRICS/Performance '12: Proceedings of the ACM
  SIGMETRICS and Performance Conference}, 2012.

\bibitem[Rogers(1995)]{rogers95diffusion}
Rogers, E.~M.
\newblock \emph{Diffusion of Innovations}.
\newblock Free Press, New York, fourth edition, 1995.

\bibitem[Saito et~al.(2009)Saito, Kimura, Ohara, and Motoda]{saito2009learning}
Saito, K., Kimura, M., Ohara, K., and Motoda, H.
\newblock {Learning continuous-time information diffusion model for social
  behavioral data analysis}.
\newblock \emph{Advances in Machine Learning}, pp.\  322--337, 2009.

\bibitem[Snowsill et~al.(2011)Snowsill, Fyson, De~Bie, and
  Cristianini]{snowsill2011refining}
Snowsill, T.M., Fyson, N., De~Bie, T., and Cristianini, N.
\newblock Refining causality: who copied from whom?
\newblock In \emph{KDD '11: Proceedings of the 17th ACM SIGKDD International
  Conference on Knowledge Discovery and Data Mining}, 2011.

\bibitem[Wang et~al.(2012)Wang, Ermon, and Hopcroft]{wang2012feature}
Wang, L., Ermon, S., and Hopcroft, J.
\newblock Feature-enhanced probabilistic models for diffusion network
  inference.
\newblock In \emph{ECML PKDD '12: Proceedings of the European Conference on
  Machine Learning and Principles and Practice of Knowledge Discovery in
  Databases}, 2012.

\end{thebibliography}

\end{document}